\begin{document}

\def \d {{\rm d}}

\def \bm #1 {\mbox{\boldmath{$m_{(#1)}$}}}

\def \bF {\mbox{\boldmath{$F$}}}
\def \bV {\mbox{\boldmath{$V$}}}
\def \bff {\mbox{\boldmath{$f$}}}
\def \bT {\mbox{\boldmath{$T$}}}
\def \bk {\mbox{\boldmath{$k$}}}
\def \bl {\mbox{\boldmath{$\ell$}}}
\def \bn {\mbox{\boldmath{$n$}}}
\def \bbm {\mbox{\boldmath{$m$}}}
\def \tbbm {\mbox{\boldmath{$\bar m$}}}

\def \T {\bigtriangleup}
\newcommand{\msub}[2]{m^{(#1)}_{#2}}
\newcommand{\msup}[2]{m_{(#1)}^{#2}}

\newcommand{\be}{\begin{equation}}
\newcommand{\ee}{\end{equation}}

\newcommand{\beqn}{\begin{eqnarray}}
\newcommand{\eeqn}{\end{eqnarray}}
\newcommand{\AdS}{anti--de~Sitter }
\newcommand{\AAdS}{\mbox{(anti--)}de~Sitter }
\newcommand{\AAN}{\mbox{(anti--)}Nariai }
\newcommand{\AS}{Aichelburg-Sexl }
\newcommand{\pa}{\partial}
\newcommand{\pp}{{\it pp\,}-}
\newcommand{\ba}{\begin{array}}
\newcommand{\ea}{\end{array}}

\newcommand{\M}[3] {{\stackrel{#1}{M}}_{{#2}{#3}}}
\newcommand{\m}[3] {{\stackrel{\hspace{.3cm}#1}{m}}_{\!{#2}{#3}}\,}

\newcommand{\tr}{\textcolor{red}}
\newcommand{\tb}{\textcolor{blue}}
\newcommand{\tg}{\textcolor{green}}

\def\a{\alpha}
\def\g{\gamma}
\def\de{\delta}

\def\b{{\kappa_0}}

\def\E{{\cal E}}
\def\B{{\cal B}}
\def\R{{\cal R}}
\def\F{{\cal F}}
\def\L{{\cal L}}

\def\e{e}
\def\bb{b}

\newtheorem{theorem}{Theorem}[section] 
\newtheorem{cor}[theorem]{Corollary} 
\newtheorem{lemma}[theorem]{Lemma} 
\newtheorem{proposition}[theorem]{Proposition}
\newtheorem{definition}[theorem]{Definition}
\newtheorem{remark}[theorem]{Remark}

\title{Universal electromagnetic fields}

\author[1]{Sigbj\o rn Hervik\thanks{sigbjorn.hervik@uis.no}}
\author[2,3]{Marcello Ortaggio\thanks{ortaggio(at)math(dot)cas(dot)cz}}
\author[2]{Vojt\v ech Pravda\thanks{pravda@math.cas.cz}}

\affil[1]{Faculty of Science and Technology, University of Stavanger, N-4036 Stavanger, Norway}
\affil[2]{Institute of Mathematics of the Czech Academy of Sciences, \newline \v Zitn\' a 25, 115 67 Prague 1, Czech Republic}
\affil[3]{Instituto de Ciencias F\'{\i}sicas y Matem\'aticas, Universidad Austral de Chile, \newline Edificio Emilio Pugin, cuarto piso, Campus Isla Teja, Valdivia, Chile}

\maketitle

\abstract{We study {\em universal} electromagnetic (test) fields, i.e., $p$-forms fields $\bF$ that solve simultaneously (virtually) any generalized electrodynamics (containing arbitrary powers and derivatives of $\bF$ in the field equations) in $n$ spacetime dimensions. One of the main results is a sufficient condition: any null $\bF$ that solves Maxwell's equations in a Kundt spacetime of aligned Weyl and traceless-Ricci type III is universal (in particular thus providing examples of $p$-form Galileons on curved Kundt backgrounds). In addition, a few examples in Kundt spacetimes of Weyl type II are presented. Some necessary conditions are also obtained, which are particularly strong in the case $n=4=2p$: all the scalar invariants of a universal 2-form in four dimensions must be constant, and vanish in the special case of a null $\bF$.}

\vspace{.2cm}
\noindent
PACS 04.50.+h, 04.20.Jb, 04.40.Nr


\tableofcontents

\section{Introduction}

\label{intro}

\subsection{Background}

Modifications of Maxwell's equations have been proposed in the context of classical theories in order to cure the divergent electron's self-energy \cite{Mie12,Born33,BorInf34,Bopp40,Podolsky42}. They have also appeared in effective theories derived from quantum electrodynamics (QED) \cite{HeiEul36,Weisskopf36,Euler36,Schwinger51}) or from string theory (cf., e.g., the review \cite{Tseytlin00} and references therein). A particularly well-known example of a generalized theory is given by non-linear electrodynamics (NLE), which is defined by a Lagrangian depending (in principle arbitrarily) on the two algebraic invariants $F_{ab}F^{ab}$ and $F_{ab}*F^{ab}$ \cite{Born37,Plebanski70}.

In early works by Schr\"odinger \cite{Schroedinger35,Schroedinger43} it was observed that all {\em null} fields (defined by $F_{ab}F^{ab}=0=F_{ab}*F^{ab}$) which solve Maxwell's equations also automatically solve the field equations of any NLE (in vacuum). In this sense, null fields display theory-independent properties. Subsequently, it was noticed that {\em plane waves} (a special case of null fields) solve not only NLE but also higher-derivative theories in a flat spacetime \cite{Deser75}. It is clearly desirable to understand to what extent these results can be generalized, e.g., to field configurations other than plane-waves, as well as to curved backgrounds. Recently, a larger class of {\em universal} solutions
has been constructed \cite{OrtPra18} -- it consists of null electromagnetic fields in four-dimensional Kundt spacetimes of (aligned) Petrov and traceless-Ricci type III. In the present paper we will further extend the results \cite{OrtPra18}, particularly in two directions. First, we shall consider $p$-form fields in arbitrary dimensions $n$, which are of more direct interest for supergravity and string theory applications (recovering the sufficient conditions for universality of \cite{OrtPra18} in the special case $n=4=2p$). Additionally, we will also discuss the {\em non-null} case and also obtain certain necessary conditions for a $p$-form to be universal. In this case, more progress will be possible when $n=4=2p$.

In the rest of this section we provide some preliminary definitions. Section~\ref{sec_sufficient} presents a few sufficient conditions for a  $p$-form field to be universal (or 0-universal, as defined below) -- along with other results, it contains Theorem~\ref{prop_UVSI}, one of the main conclusions of this paper. In section~\ref{sec_necessary} we focus instead on necessary conditions. In particular, more progress is possible in the case of 2-forms in four dimensions, for which we obtain Theorem~\ref{prop_univ_necess_4D} and Proposition~\ref{prop_null_univ_necess_4D}. Appendix~\ref{app_D_Kundt} summarizes the notation used throughout the paper and contains some auxiliary technical results (some of which already known, cf. the references given there), needed in the proofs of the main theorems and propositions. 
Appendix~\ref{app_subsec_balanced} also contains some additional related results (Lemmas~\ref{lemma_1balanced} and \ref{lemma_1deriv}) of some interest for future studies.

\subsection{Preliminaries and definitions}

Throughout the paper we will restrict ourselves to the case in which there are no charges and currents, therefore we will omit the word ``sourcefree'' when referring to the (generalized) Maxwell equations. We will consider only Maxwell {\em test} $p$-form fields in an $n$-dimensional spacetime (i.e., we will not consider the back-reaction). We will denote by $d$, $\delta$ and $*$, respectively, the exterior derivative, the codifferential and the Hodge dual of differential forms.

Let us start with a definition which will be used throughout this work.
\begin{definition}[Universal electromagnetic field]
\label{def_UM}
	A $p$-form $\bF$ is called universal if it satisfies the pair of generalized Maxwell's equations
	\be
	 \d\bF=0,  \qquad *\d\!*\!\!\tilde\bF=0 , 
	 \label{UM}
	\ee
	where the second equation holds simultaneously for {\em all} $p$-forms $\tilde\bF$ constructed polynomially from $\bF$ and its covariant derivatives (hereafter, it is understood that the dual $(n-p)$-form $*\bF_{b_1\ldots b_{n-p}}\equiv\frac{1}{p!}\epsilon^{a_1\ldots a_p}_{\qquad \ b_1\ldots b_{n-p}}F_{a_1\ldots a_p}$ and its derivatives can also be used in the construction).
\end{definition}

In Definition~\ref{def_UM}, $\tilde\bF$ may thus include terms with arbitrary higher-order derivative ``corrections''. Furthermore, the scalar coefficients appearing in such a polynomial need {\em not} be polynomials of the scalar invariants of $\bF$ (and their covariant derivatives).
Of course, a universal $\bF$ in particular satisfies the standard Maxwell equations (for the special choice  $\tilde\bF=\bF$).\footnote{Note that there is an ``asymmetry'' in the above definition in the sense that $\bF$ must be closed and co-closed, whereas all the $\tilde\bF$ one can construct are only required to be co-closed. In the special case $n=2p$, however, any possible $\tilde\bF$ must also be closed (since one can replace $\tilde\bF$ by $\hat\bF\equiv *\bF$ in Definition~\ref{def_UM} and use the fact that $\hat\bF$ must be co-closed).} {Definition~\ref{def_UM} extends to arbitrary $n$ and $p$ a definition proposed in \cite{OrtPra18} for $n=4=2p$.}

\begin{remark}[Non-linear electrodynamics (NLE)]
	NLE is usually defined in the case $n=2p=4$ (cf., e.g., \cite{Plebanski70}) by a Lagrangian which depends only on the two 0-order (i.e., algebraic) invariants $F_{ab}F^{ab}$ and $F_{ab}{}^*F^{ab}$. Hence, in NLE {the 2-form} $\tilde\bF$ is a linear 		  
	combination $\tilde\bF=\alpha\bF+\beta{}^*\bF$, where $\alpha$ and $\beta$ are (virtually arbitrary) functions of $F_{ab}F^{ab}$ and $F_{ab}{}^*F^{ab}$. It follows that a universal $\bF$ also obeys, in particular, the field equations of any NLE (e.g., Born-Infeld's theory).
	\label{rem_NLE}
\end{remark}

{One could also define extensions of NLE to arbitrary $n$ and $p$ by considering theories where $\tilde\bF$ is constructed algebraically from $\bF$ (i.e., without taking covariant derivatives).\footnote{In general, there can be more than two {independent 0-order invariants that can be used to construct a Lagrangian}. For example, a 2-form in $n$ dimensions possesses $[n/2]$ algebraic invariants (cf. section~4 of \cite{Born37}).} Since these} represent a very special subclass of theories of electrodynamics, it appears useful to define also the following more restricted notion of universality.

\begin{definition}[0-universal electromagnetic field]
\label{def_0UM}
	A $p$-form $\bF$ is called 0-universal if it satisfies the pair of generalized Maxwell's equations~\eqref{UM}, where $\tilde\bF$ can be any $p$-form constructed {\em algebraically} and polynomially from $\bF$.
\end{definition}
In particular, 0-universal $p$-forms solve theories for which $\tilde\bF$ is of the form ``$\bF$+higher powers of $\bF$''.

Similarly, it is also useful to define 
\begin{definition}[$K$-universal electromagnetic field]
\label{def_KUM}
	A $p$-form $\bF$ is called K-universal if it satisfies the pair of generalized Maxwell's equations~\eqref{UM}, where $\tilde\bF$ can be any $p$-form constructed polynomially from $\bF$ and its first $K$ covariant derivatives.
\end{definition}

In particular, 1-universal $p$-forms in flat space provide examples of $p$-form Galileons \cite{DefDesEsp10}.

The notion of {\em null} fields is well known for 2-forms in four dimensions (cf., e.g., \cite{Stephanibook}); this has been extended to arbitrary dimension~$n$ \cite{Sokolowskietal93} and rank~$p$ \cite{OrtPra16}. {Let us thus recall}
\begin{definition}[Type N (or null) $p$-forms \cite{OrtPra16}]
	\label{def_N}
 At a spacetime point, a $p$-form $\bF$ is of type N if it satisfies
\be
	\ell^a F_{a b_1\ldots b_{p-1}}=0 , \qquad \ell_{[a}F_{b_1\ldots b_{p}]}=0 ,
	\label{BelDeb}
\ee
where (\eqref{BelDeb} implies that) $\bl$ is a null vector. Equivalently, the second condition can be  replaced by $\ell^a\, {}^*F_{a b_1\ldots b_{n-p-1}}=0$.
\end{definition}

 A further useful definition is the following. 
\begin{definition}[CSI and VSI electromagnetic field]
\label{def_CSI}
	A $p$-form $\bF$ in a spacetime with metric $g_{ab}$ is called CSI (``constant scalar invariants'') if the scalar polynomial invariants constructed from $\bF$ and its covariant derivatives of arbitrary order are constant. If all such invariant vanish, $\bF$ is VSI (``vanishing scalar invariants'') \cite{OrtPra16}. If only derivatives up to order $K$ are considered, we will use the notation CSI$_K$ and VSI$_K$.
\end{definition}

\paragraph{Notation} Throughout the paper we will employ a null frame in an $n$-dimensional spacetime and the corresponding Ricci rotation coefficients and directional derivatives, as defined in Appendix~\ref{app_sub_not}. We will use the abstract tensor notation ``Riem'' to denote the Riemann tensor. A tensor defined by the part of boost weight (b.w.) $b$ of the Riemann tensor or of its covariant derivative will be denoted, respectively, as Riem$_b$ or $[\nabla\mbox{(Riem)}]_b$; by contrast, a frame component of b.w. $b$ of the Riemann tensor will be denoted as ${\cal R}_b$. For general information about the b.w. classification of tensors we refer to \cite{Milsonetal05} and, e.g., to the review \cite{OrtPraPra13rev}.

\section{Sufficient conditions}

\label{sec_sufficient}

\subsection{Covariantly constant $p$-forms}

An observation made in \cite{OrtPra18} in the case $n=4=2p$ can be extended to arbitrary $n$ and $p$ as follows:
\begin{proposition}
\label{prop_covcost}
	A covariantly constant $p$-form $\bF$ is universal and CSI.
\end{proposition}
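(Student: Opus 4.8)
The plan is to establish the three defining properties in turn: first that $\bF$ is closed and co-closed (so that it is a genuine Maxwell field), then that it is CSI, and finally that every $\tilde\bF$ built polynomially from $\bF$ and its covariant derivatives is also co-closed. The starting observation is that $\nabla\bF=0$ immediately gives $\d\bF=0$ (since $\d\bF$ is the antisymmetrization of $\nabla\bF$) and $\delta\bF=0$ (since $\delta\bF$ is a trace of $\nabla\bF$); equivalently, $*\d\!*\!\bF=0$ up to sign, so $\bF$ itself satisfies \eqref{UM}.

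Next I would argue that $\bF$ is CSI. Any scalar polynomial invariant is built by taking tensor products of $\bF$, its covariant derivatives, the Hodge dual $*\bF$ and its covariant derivatives, and then fully contracting with the metric and the volume form. Since $\nabla\bF=0$, all covariant derivatives of $\bF$ vanish, and since $\nabla$ annihilates the metric and the volume element, $\nabla(*\bF)=0$ as well, so all higher derivatives vanish too. Hence any such invariant $I$ is a contraction of copies of $\bF$ and $*\bF$ alone, and $\nabla I=0$ by the Leibniz rule; a scalar with vanishing gradient on a connected spacetime is constant. This shows $\bF$ is CSI (indeed the nonzero invariants need not vanish, so it is CSI but not in general VSI).

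Finally, for universality I must show that every $p$-form $\tilde\bF$ constructed polynomially from $\bF$ and its covariant derivatives (and from $*\bF$ and its derivatives) is co-closed. But by the same vanishing-derivative argument, any such $\tilde\bF$ reduces to a polynomial combination, with \emph{constant} coefficients, of terms obtained by contracting tensor products of $\bF$ and $*\bF$ down to a $p$-form; the scalar coefficients themselves, being polynomial in the invariants of $\bF$, are constants by the CSI property just established. Each such building block is therefore covariantly constant (a contraction of covariantly constant tensors, times constants), hence so is $\tilde\bF$, and therefore $\delta\tilde\bF=0$, i.e. $*\d\!*\!\tilde\bF=0$. This verifies Definition~\ref{def_UM}.

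There is no serious obstacle here; the only point requiring a little care is the bookkeeping that shows every admissible $\tilde\bF$ (and every admissible scalar invariant) is ultimately assembled, using the metric and volume form, out of factors each of which $\nabla$ annihilates — once that is granted, the Leibniz rule does all the work. One should also note the mild assumption that the spacetime is connected, so that covariantly constant scalars are genuinely constant.
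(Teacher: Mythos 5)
Your proof is correct and follows essentially the same route as the paper's (which is just a terser version of the same three observations: $\nabla\bF=0$ gives $\d\bF=0=\delta\bF$, all invariants are constant, and any $\tilde\bF$ is itself covariantly constant hence co-closed). The only nuance worth noting is that Definition~\ref{def_UM} allows the scalar coefficients in $\tilde\bF$ to be arbitrary functions of the invariants rather than polynomials in them, but your argument is unaffected since any function of constant invariants is constant.
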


\begin{proof} 

Since any covariant derivative of $\bF$ is zero, it is obvious that also $\d\bF=0=\delta\bF$ . Similarly, any scalar invariant constructed from $\bF$ must be constant (in particular, those constructed from the derivatives of $\bF$ are zero). Any $\tilde\bF$ is clearly also covariantly constant and thus has the same properties.

\end{proof} 

For example, in a flat spacetime in Cartesian coordinates any wedge-product of the coordinate differentials gives a covariantly constant form. More generally, it is well-known that in direct product spacetimes, the volume form of each of the factor spaces is covariantly constant (as follows easily from the results of \cite{Ficken39}).

\begin{remark}[Covariantly constant 2-forms in four dimensions] 
	\label{rem_cc4D}
A four-dimensional spacetime admitting a covariantly constant 2-form $\bF$ is {either} (cf. section~35.1.2 of \cite{Stephanibook} and references therein): (i) a direct product of two two-dimensional spacetimes, if $\bF$ is {\em non-null}; (ii) an aligned \pp wave of Riemann type~N, if $\bF$ is {\em null}. In both cases the spacetime is Kundt and null-recurrent.  
\end{remark}

\begin{remark}[Covariantly constant null $p$-forms] 
A covariantly constant {\em null} $\bF$ is obviously VSI (cf. also \cite{OrtPra16}). Moreover, the background spacetime must be an aligned \pp wave (which follows from $F_{ac_1\ldots c_{p-1}} {F_b}^{c_1\ldots c_{p-1}}\propto \ell_a\ell_b$ being covariantly constant, similarly as in \cite{Stephanibook}). This partly extends Remark~\ref{rem_cc4D} to arbitrary $n$ and $p$.
\end{remark}

\subsection{Null $p$-forms}

As mentioned in section~\ref{intro}, it was already known to Schr{\"o}dinger \cite{Schroedinger35,Schroedinger43} that all null Maxwell fields solve the equations for the electromagnetic field in any NLE. Generalizing \cite{OrtPra18}, this observation can be straightforwardly extended to all null (i.e., VSI$_0$ \cite{OrtPra16}) $p$-forms, so that (recalling Definition~\ref{def_0UM})

\begin{proposition}[Sufficient conditions for a 0-universal $\bF$]
\label{prop_0Unull}
	A null $p$-form $\bF$ that solves Maxwell's equations is 0-universal. 
\end{proposition}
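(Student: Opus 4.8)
The plan is to show that if $\bF$ is null and satisfies $\d\bF=0=\delta\bF$, then every $p$-form $\tilde\bF$ built algebraically and polynomially from $\bF$ (and $*\bF$) is automatically of the form $\tilde\bF = c\,\bF + c'\,*\bF$ for some scalar functions $c,c'$, and that such a $\tilde\bF$ is again co-closed. The first observation is that a null $p$-form is completely determined (up to scale) by its structure $\bF = \bl\wedge\bbm$ for some appropriate $(p-1)$-form $\bbm$ orthogonal to the null vector $\bl$, and that all algebraic scalar invariants of $\bF$ vanish (it is VSI$_0$). Indeed, any full contraction of copies of $\bF$ and $*\bF$ either pairs an $\bl$ with another $\bl$ (giving zero, since $\bl$ is null) or leaves an uncontracted $\bl$-index, so it cannot produce a nonzero scalar. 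Hence any scalar coefficient that could multiply a monomial in $\bF$ is irrelevant only insofar as the monomials themselves must be re-expressed.

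The key algebraic step is therefore to argue that the \emph{only} $p$-forms one can build algebraically from a type N $p$-form $\bF$ are multiples of $\bF$ and of $*\bF$. First I would note that any such construction, at the level of index structure, involves tensor products of $F_{a_1\ldots a_p}$ and $\epsilon^{a_1\ldots a_p}{}_{b_1\ldots b_{n-p}} F_{a_1\ldots a_p}$ with various index contractions, followed by an antisymmetrization over $p$ free indices. Because every factor of $\bF$ carries the null vector $\bl$ (schematically $F_{a_1\ldots a_p}\propto \ell_{[a_1}m_{a_2\ldots a_p]}$ with $m$ orthogonal to $\bl$), and $\ell^a\ell_a=0$, $\ell^a m_{a\ldots}=0$, any contraction between two distinct factors vanishes unless it is trivial. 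Thus a nonzero monomial can contain at most one ``active'' factor of $\bF$ (or one of $*\bF$), with the remaining factors only contributing vanishing scalar invariants; after the final antisymmetrization the result is necessarily proportional to $\bF$ or to $*\bF$ (the latter arising when an odd number of $\epsilon$-tensors survive). One must also check the mixed case: a product containing both $\bF$-factors and $*\bF$-factors reduces, upon contracting the $\epsilon$'s, to the same conclusion because $*\bF$ is itself null with the same $\bl$. This reduction should be carried out carefully but is essentially the content of the statement that a null $p$-form has, up to Hodge duality and scalar rescaling, a one-dimensional algebraic orbit.

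Given that $\tilde\bF = c\,\bF + c'\,*\bF$ with $c,c'$ scalar functions on the spacetime, it remains to verify $\delta\tilde\bF=0$, i.e. $*\d*\tilde\bF=0$. We have $\d*\tilde\bF = \d(c\,*\bF) + \d(c'\,**\bF) = \d c\wedge *\bF + c\,\d*\bF \pm \d c'\wedge \bF \pm c'\,\d\bF$. Now $\d\bF=0$ and $\d*\bF = \pm *\delta\bF = 0$ by hypothesis, so $\d*\tilde\bF = \d c\wedge *\bF \pm \d c'\wedge\bF$. Dualizing, $\delta\tilde\bF$ is a combination of the interior products $\iota_{\nabla c}\bF$ and $\iota_{\nabla c'}*\bF$ (up to signs and a possible second term $c\,\delta\bF=0$ etc.). The crucial point is that the scalar functions $c,c'$ are themselves polynomial scalar invariants of $\bF$ and its derivatives, hence — since $\bF$ is VSI — they are \emph{constant}, so $\d c=0=\d c'$ and $\delta\tilde\bF=0$. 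Actually one must be slightly careful here: $c$ need not be a polynomial invariant directly, but on a VSI$_0$ background the relevant coefficients emerging from the algebraic reduction are polynomials in the (vanishing) algebraic invariants of $\bF$, hence constants; this is the place where the algebraic (0-universal) restriction, as opposed to allowing derivatives, is used, and where I would be most careful. Once $c,c'$ are seen to be constant the co-closedness of $\tilde\bF$ follows immediately from that of $\bF$ and $*\bF$, completing the proof. The main obstacle, then, is the bookkeeping in the algebraic reduction step — proving rigorously that no exotic $p$-form can be assembled from a null $\bF$ other than multiples of $\bF$ and $*\bF$ — and I would handle it by working in a null frame adapted to $\bl$ and tracking boost weights: every factor of $\bF$ has a single positive boost weight component, contractions can only lower or annihilate, and the only surviving $p$-index objects are $\bF$ itself and $*\bF$.
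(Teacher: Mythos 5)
Your proposal is correct and uses essentially the same argument as the paper: boost-weight counting shows that any $p$-form built algebraically and polynomially from a null $\bF$ must be linear in $\bF$ and ${}^*\bF$ with constant coefficients (since all algebraic invariants vanish), whence Maxwell's equations are inherited. (Your remark that each factor of $\bF$ carries a ``positive'' boost weight is just a sign-convention slip --- type N components have b.w.\ $-1$ --- and does not affect the argument.)
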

{
\begin{proof} 
Since $\bF$ is null, any $p$-form $\tilde\bF$ constructed algebraically and polynomially out of $\bF$ can only be linear in $\bF$ (by boost-weight counting), and therefore obeys Maxwell's equations if $\bF$ does. 
\end{proof} 
}

We observe that no {direct} restrictions on the background spacetimes are placed by Proposition~\ref{prop_0Unull}, as opposed to Proposition~\ref{prop_UVSI} given below in the case of certain fully universal solutions. 

Recall that, in four dimensions, null 2-forms solutions to the Maxwell equations can be associated with shearfree cogruences of null geodesics via the Mariot-Robinson theorem (cf., e.g., \cite{Stephanibook}). A partial extension to $p$-forms in arbitrary dimensions has been discussed in \cite{Durkeeetal10} (see also \cite{Sokolowskietal93,Ortaggio07} for related earlier results). It is worth mentioning that, from a physical viewpoint, null fields are interesting since they characterize electromagnetic plane waves \cite{Schwinger51,syngespec}, the asymptotic behaviour of radiative systems \cite{penrosebook2}, and the field produced by high-energy sources \cite{Bergmann42,syngespec,RobRoz84}. They are also relevant to Penrose's limits in supergravity \cite{Guven87}.

\subsection{VSI $p$-forms}

\subsubsection{Sufficient condition}

The result of Proposition~\ref{prop_0Unull} for null forms can be considerably strengthened (i.e., to full universality) if one restricts oneself to null forms in suitable background spacetimes, namely
\begin{theorem}[Sufficient conditions for a universal $\bF$]
\label{prop_UVSI}
	In a Kundt spacetime of {aligned} Weyl and traceless-Ricci type III, any aligned null $p$-form $\bF$ that solves Maxwell's equations is universal {(and VSI)}.
\end{theorem}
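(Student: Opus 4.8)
The plan is to show that any $p$-form $\tilde\bF$ built polynomially from $\bF$ and its covariant derivatives is forced, by boost-weight (b.w.) considerations together with the geometric constraints of the background, to be proportional to $\bF$ (up to a possible $*$ on $\bF$ when $n=2p$), and with \emph{constant} proportionality coefficient. Once that is established, the conclusion follows immediately: $\d\bF=0=\delta\bF$ is given (since $\bF$ solves Maxwell), constant multiples of a closed/co-closed form are closed/co-closed, and $*\bF$ inherits the same properties in the $n=2p$ case. So the entire content is the structural claim about $\tilde\bF$, and the ``universal $\Leftrightarrow$ VSI'' part will come out along the way.

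First I would set up the b.w. bookkeeping. Since $\bF$ is null and aligned with the Kundt null direction $\bl$, it has b.w. $-1$ only (all components of positive or zero b.w. vanish, with one negative-b.w. block). Because the Weyl and traceless-Ricci tensors are of aligned type III, the Riemann tensor has only components of b.w. $\le 0$, and moreover its b.w.-$0$ part is pure trace (Ricci scalar / cosmological-constant-like), which in a Kundt background of this type is in fact constant — this is where I would invoke the auxiliary Kundt lemmas of Appendix~\ref{app_D_Kundt}. Covariant differentiation in a Kundt spacetime can only lower b.w. or keep it the same (the relevant Ricci rotation coefficients of positive b.w. vanish), and similarly $\nabla$(Riem) has only non-positive b.w. components. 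Consequently, every covariant derivative $\nabla^{(k)}\bF$ has b.w. $\le -1$, and more precisely the monomials appearing in it decompose into a b.w.-$(-1)$ piece proportional to the ``building block'' of $\bF$ and strictly-more-negative pieces.

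Next comes the core argument: take any monomial in $\tilde\bF$, i.e. a tensor obtained by contracting together several copies of $\nabla^{(k_i)}\bF$, possibly with copies of Riem or $\nabla$(Riem), and with metrics/$\epsilon$'s, to yield a rank-$p$ tensor. Each factor $\nabla^{(k_i)}\bF$ contributes b.w. $\le -1$; the curvature factors contribute b.w. $\le 0$; the metric and $\epsilon$ contribute b.w. $0$. For the total to be a $p$-form of the same algebraic type as $\bF$ (b.w. $-1$), there can be \emph{exactly one} factor of $\bF$-type and it must contribute exactly b.w. $-1$, which forces $k_i=0$ for that factor (any derivative would either push it below $-1$ or require compensating positive-b.w. curvature pieces, which do not exist), and it forces \emph{every other} factor to have b.w. exactly $0$. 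But a product of b.w.-$0$ curvature/metric/$\epsilon$ factors contracted down to a scalar is a b.w.-$0$ scalar invariant of the background, and — again by the Kundt type-III structure and the Appendix lemmas — such a scalar must be a constant (essentially a polynomial in the cosmological constant). Therefore each surviving monomial is (constant)$\times\bF$ or, when $n=2p$, possibly (constant)$\times *\bF$; any other monomial vanishes identically. Summing, $\tilde\bF = c_1\bF + c_2\,{}^*\bF$ with $c_1,c_2$ constants (and $c_2=0$ unless $n=2p$), so $\delta\tilde\bF=0$. The same counting with $k$ arbitrary also shows every scalar invariant of $\bF$ vanishes (a nonzero scalar would need total b.w. $0$, impossible with at least one b.w.-$\le-1$ factor of $\bF$ and no positive-b.w. anything), giving the VSI assertion.

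The main obstacle — and the step I would spend the most care on — is pinning down precisely that the b.w.-$0$ scalars and the b.w.-$0$ parts of Riem are \emph{constant} in a Kundt spacetime of aligned Weyl and traceless-Ricci type III, not merely of non-positive b.w. This is what separates ``0-universal'' (Proposition~\ref{prop_0Unull}) from full universality: it is exactly the appearance of covariant derivatives that could in principle generate non-constant b.w.-$0$ coefficients, and ruling this out requires the detailed Kundt identities (commutators of directional derivatives, the behaviour of $\nabla$ on the frame, vanishing of the relevant positive-b.w. rotation coefficients, and constancy of the curvature invariants) collected in Appendix~\ref{app_D_Kundt}. A secondary technical point is to handle $\epsilon$-tensor contractions carefully in the $n=2p$ case so that $*\bF$ is correctly identified as the only additional algebraic option, and to confirm it too is b.w. $-1$ and aligned.
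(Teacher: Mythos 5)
There is a genuine gap, and it sits exactly where your proof does all of its work by assertion. Your structural claim --- that every monomial in $\tilde\bF$ reduces to a \emph{constant} multiple of $\bF$ (or $*\bF$) because ``any derivative would push the $\bF$-factor below b.w.\ $-1$'' --- is false. In a degenerate Kundt spacetime the covariant derivative of a balanced tensor is again balanced (Lemma~\ref{lemma_deriv}), i.e.\ it retains nonvanishing b.w.\ $-1$ components; concretely, $\nabla\bF$ contains frame components such as $\delta_jF_{1i_1\ldots i_{p-1}}$, transverse derivatives of the profile functions, which are \emph{not} proportional to the components of $\bF$. Hence $\tilde\bF$ is only forced to be \emph{linear} in $\bF$ and its derivatives, $\tilde\bF=c_0\bF+c_I[\nabla^I\bF]_p+d_J[\nabla^J{}^*\bF]_p$, and the entire substance of the theorem is showing that the contracted derivative terms $[\nabla^I\bF]_p$ are divergence-free. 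The paper does this by repeatedly commuting covariant derivatives (generating curvature corrections controlled by the type~III/balanced structure), reducing each term either to a derivative of $\delta\bF=0$ or to $g^{ab}\nabla_b\nabla_a\bF$, and then invoking the Weitzenb\"ock identity together with $\Delta\bF=0$ to trade the latter for $R\,[\nabla^{I-2}\bF]_p$ and iterate. None of this appears in your proposal, and without it the theorem is not proved --- indeed, with only your b.w.\ counting you recover essentially Proposition~\ref{prop_0Unull} (0-universality), not full universality.

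A second, related error: the quantity you identify as the ``main obstacle'' --- constancy of the b.w.\ $0$ curvature scalars --- is in fact \emph{not} available. In these spacetimes one only has $DR=0=\delta_iR$, so $R=R(u)$ generically (cf.\ the remark after Proposition~\ref{prop_Kundt_III}: the background is CSI iff $\T R=0$, which is not assumed). The correct resolution is not that $R$ is constant but that $R_{;a}\propto\ell_a$, so that terms involving the gradient of $R$ contracted into an aligned b.w.\ $-1$ form do not contribute to its divergence; this is also why coefficients like $R(u)$ multiplying $\bF$ (which arise from the Weitzenb\"ock step, e.g.\ $g^{ab}\nabla_b\nabla_aF\propto R\bF$) are harmless even though they are non-constant. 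Your b.w.\ counting for the linearity in $\bF$ and for the VSI statement is sound, but the heart of the proof is missing.
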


\begin{proof} 
	
 Before starting, let us note that the considered background spacetime is necessarily degenerate Kundt and $DR=0=\delta_i R$ (cf. Propositions~\ref{prop_Kundt_deg} and \ref{prop_Kundt_III}). This implies that $\bF$ is VSI (Theorem~1.5 of \cite{OrtPra16}).

By Proposition~\ref{prop_Kundt_III}, $\nabla^{(I)}$(Riem) is balanced (and thus of type III or more special) for any $I\ge1$. This will be useful in the following. By the assumptions, the Riemann tensor can instead possess also b.w. 0 components, but is restricted to the special form 
\be
 R_{abcd}=\frac{2R}{n(n-1)}g_{a[c}{g_{d]b}}+\left(\mbox{b.w.}<0\right) ,
 \label{riem_spec}
\ee
where we have indicated the arbitrary components of negative b.w. implicitly in brackets, since their explicit form is not needed here. 

Bearing in mind the above observations and under the assumed conditions, the strategy is thus to prove that any possible $\tilde\bF$ is divergencefree.\footnote{Recall that for a $p$-form $\mbox{\boldmath{$\omega$}}$ the identity $\mbox{div}\,\omega_{a_1\ldots a_{p-1}}\equiv\omega^b_{\ \, a_1\ldots a_{p-1};b}=\mbox{sign}(g)(-1)^{n(p+1)}*\d*\!\omega_{a_1\ldots a_{p-1}}\equiv -\delta\omega_{a_1\ldots a_{p-1}}$ holds ($\mbox{sign}(g)=-1$ in this paper; our conventions for the dual of a $p$-form are given in Definition~\ref{def_UM})}

Now, since $\bF$ is VSI, any tensor constructed out of it and its covariant derivatives is also VSI and aligned (cf. Lemma B.4 of \cite{OrtPra16}) and thus contains only components of negative b.w. -- for totally antisymmetric tensors, such as $\tilde\bF$, the only possible b.w. is $-1$.
 It is thus clear that $\tilde\bF$ can only be {\em linear} in $\bF$ and its derivatives (higher order terms would have b.w. $-2$ or less). In general $\tilde\bF$ will thus be of the form
\be
	\tilde\bF=c_0\bF+c_I\left[\nabla^I\bF\right]_p+d_J\left[\nabla^J{}^*\bF\right]_p ,
	\label{tF}
\ee
where there is summation over the positive integers $I$ and $J$, the coefficients $c_0$, $c_I$ and $d_J$ are constants (recall that $\bF$ does not possess any non-zero invariant), and the notation $[\ldots]_{p}$ means that the quantity within square brackets needs to be contracted and/or antisymmetrized in such a way as to produce a $p$-form (the particular way how this is done is not important for our discussion). Let us now discuss separately the three possible types of terms in \eqref{tF} and show that they are all necessarily divergencefree.

The first term in \eqref{tF} is harmless since $\bF$ is itself divergencefree (and similarly for a term $d_0{}^*\bF$ that should be added to \eqref{tF} in the special case $n=2p$). 

The terms $\left[\nabla^I\bF\right]_p$ in \eqref{tF} need $I/2$ contractions with the metric tensor (so that $I$ must in fact be even) in order to produce an object of rank $p$, thus resulting in terms of one of the two forms 
\be
  g^{ab}\nabla^L_{\cdots}\nabla_b\nabla^{I-L-1}_{\cdots}F_{a\ldots} , \qquad g^{ab}\nabla^L_{\cdots}\nabla_b\nabla^K_{\cdots}\nabla_a\nabla^{I-L-K-2}_{\cdots}F_{\ldots} ,
	\label{F_partial}
\ee
where the ellipsis indicate implicitly all the remaining indices (contracted or not, and properly antisymmetrized -- this is not important four our analysis, as will be clear below). Now, if we considered theories in flat space, the terms \eqref{F_partial} could be rewritten as (note the respective indices contracted with $g^{ab}$)
\be
  g^{ab}\nabla^{I-1}_{\cdots}\nabla_b F_{a\ldots} , \qquad g^{ab}\nabla^{I-2}_{\cdots}\nabla_b\nabla_aF_{\ldots} .
	\label{F_partial2}
\ee
since covariant derivatives commute when the curvature is zero. (The advantage of the form \eqref{F_partial2} is that it can be easily handled with, as we shall discuss below after \eqref{Ft_simplified} also in the presence of curvature). However, this is generically not true in a curved spacetime, and if we want to shift indices of covariant derivatives we need to consider the generalized Ricci identity. For any tensor $\bT$, this can be expressed schematically as
\be
	[\nabla,\nabla]\bT=\bT\cdot\mbox{Riem} .
	\label{gen_ricci}
\ee

The idea is now to use this in order to show how in a curved background one can still arrive at terms of the form \eqref{F_partial2}, plus some ``corrections'' that turn out not to be an obstacle for our purpose (as we shall explain). 

Recalling that $\bF$ is VSI and all the covariant derivatives of the Riemann tensor are balanced, from \eqref{gen_ricci} we obtain
\be
	\nabla^L[\nabla,\nabla]\nabla^{I-L-2}\bF=\nabla^{I-2}\bF\cdot{\mbox{Riem}_0}+\left(\mbox{b.w.}<-1\right) .
	\label{ricci_F}
\ee
However, using \eqref{riem_spec} and recalling that a $p$-form cannot have components of b.w. smaller than $-1$, it is not difficult to see that (again schematically)
\be
 \left[\nabla^{I-2}\bF\cdot {\mbox{Riem}_0}+\left(\mbox{b.w.}<-1\right)\right]_p= R\left[\nabla^{I-2}\bF\right]_p . \label{eqcomm_form}
\ee
From $\delta_i R=0$ and $DR=0$ it follows that $R_{;a} \propto \ell_a$. Since all terms in \eqref{tF} are of b.w. $-1$, terms containing $R_{;a}$ do not
contribute to the divergence of $\tilde\bF$ and will be omitted from the next discussions. 

Equation \eqref{eqcomm_form} means that by repeated use of \eqref{ricci_F}, in terms $\left[\nabla^I\bF\right]_p$ we can shift a covariant derivative to any desired position, up to producing ``correction terms'' of the type $\left[\nabla^{I-2}\bF\right]_p$, i.e., of the form of the original terms but with order of differentiation reduced by 2. In turn, these can be reduced to the form \eqref{F_partial2} (with $I$ replaced by $I-2$), plus ``corrections'' of the type $\left[\nabla^{I-4}\bF\right]_p$, and so on. Eventually, the terms $\left[\nabla^I\bF\right]_p$ in \eqref{tF} are reduced to a sum of various terms of the form (recall that $I$ is even and therefore in the last step we obtain terms proportional to $\bF$)
\be
  \bF , \qquad g^{ab}\nabla^{M}_{\cdots}\nabla_b F_{a\ldots} , \qquad g^{ab}\nabla^{M-1}_{\cdots}\nabla_b\nabla_aF_{\ldots} \qquad (M=I-1, I-3, \ldots, 1) .
	\label{Ft_simplified}
\ee

We already observed that the first of these is divergencefree. The second one is (a derivative of) the divergence of $\bF$ and therefore vanishes. To study the third term it is useful to recall the Weitzenb\"ock identity for a $p$-form $\mbox{\boldmath{$\omega$}}$  \cite{Weitzenbock23}
\be
  g^{ab}\nabla_b\nabla_a \omega_{c_1\ldots c_p}=-\Delta\omega_{c_1\ldots c_p}+{p}R_{a[c_1}\omega^a_{\ \, c_2\ldots c_p]}-\frac{{p}(p-1)}{2}R_{ab[c_1c_2}\omega^{ab}_{\ \ \, c_3\ldots c_p]} , 
	\label{Weitzenbock}
\ee
where $\Delta=\d\delta+\delta\d$ is the Laplace-de Rham operator acting on forms (not to be confused with the symbol $\T$ defined in \eqref{covder}). Thanks to the fact that $\bF$ has only components of b.w. $-1$ and $\Delta \bF=0$ (since $\bF$ is harmonic), using \eqref{riem_spec} from \eqref{Weitzenbock} we obtain 
\be
  g^{ab}\nabla_b\nabla_a F_{c_1\ldots c_p}= \frac{p(n-p)}{n(n-1)}RF_{c_1\ldots c_p} .
\ee
Thus the third term in \eqref{Ft_simplified} is proportional to $\nabla^{M-1}_{\cdots}\bF$ (and thus of the form \eqref{F_partial}, with $I$ replaced by $M-1\le I-2$) and again can be manipulated iteratively until one obtains either $\bF$ or $g^{ab}\nabla^{M'}_{\cdots}\nabla_b F_{a\ldots}=0$ (for some $M'$). Therefore, also this term can produce only divergencefree quantities, as we wanted to prove.

Finally, let us discuss the terms $\left[\nabla^J{}^*\bF\right]_p$  in \eqref{tF}. There one needs $(n-2p+J)/2$ contractions in order to produce an object of rank $p$, with $J$ having the same parity of $n$ (and with the further condition $J>2p-n$ in the case $2p>n$; note that the case $J=2p-n$ requires no contractions, however it leads just to vanishing terms due to the total antisymmetrization and $\d\!*\!\!\bF=0$, and therefore needs no discussion). Similarly as above, we conclude that these will result in quantities of the form 
\be
  g^{ab}\nabla^L_{\cdots}\nabla_b\nabla^{J-L-1}_{\cdots}{}^*F_{a\ldots} , \qquad g^{ab}\nabla^L_{\cdots}\nabla_b\nabla^K_{\cdots}\nabla_a\nabla^{J-L-K-2}_{\cdots}{}^*F_{\ldots} ,
	\label{*F_partial}
\ee
which can be treated as done for \eqref{F_partial} ($*\bF$ is also divergencefree), thus completing the proof.

\end{proof}

\begin{remark}
	We observe that the above proof would be greatly simplified if we required $R=0$ as an additional assumption. Indeed, \eqref{ricci_F} would become simply $\nabla^L[\nabla,\nabla]\nabla^{I-L-2}\bF=\left(\mbox{b.w.}<-1\right)$. Therefore covariant derivatives of $\bF$ would essentially commute (i.e., the commutators would only produce terms of b.w. less than $-1$, which are set to zero by the ``operation'' $[\ldots]_p$) and the proof would thus be complete after \eqref{F_partial} (apart from taking into account the terms with $*\bF$).
\end{remark}

\begin{remark}[Examples]
	Note that, in particular, Ricci flat and Einstein {Kundt spacetimes} of Weyl typee III/N/O are permitted backgrounds for a universal VSI $\bF$.\footnote{Such spacetimes belong to the class of VSI/CSI metrics \cite{Coleyetal04vsi,ColHerPel06}, respectively (and include Minkowski and (A)dS). These have found various applications beyond general relativity. For example, in the context of type IIB supergravity, some VSI spacetimes coupled to null $p$-forms have been discussed in \cite{Coleyetal07}. In particular, the role of VSI \pp waves in the context of supergravity and string theory has been known for some time, see, e.g., \cite{KowalskiGlikman84,Hull84,Guven87,AmaKli89,HorSte90,Tseytlin93,BerKalOrt93}, also in the presence of supersymmetry, c.f. \cite{Tod83,KowalskiGlikman84,Hull84,Guven87,BerKalOrt93,Gauntlettetal03} and references therein.} The result of Proposition~\ref{prop_UVSI} was announced (without a proof) in section~2.4 of \cite{OrtPra16} and generalizes results obtained in \cite{OrtPra18} in the special case $n=4=2p$ (the above proof is also considerably more detailed than the one briefly sketched in \cite{OrtPra18}). An example in a Petrov type III spacetime in four dimensions (cf. eq.~(31.40) in \cite{Stephanibook}) is given by \cite{OrtPra16}
\beqn
 & & \d s^2 =2\d u\left[\d r+\frac{1}{2}\left(xr-xe^x\right)\d u\right]+ e^x(\d x^2+e^{2u}\d y^2) , \label{Petrov_EM} \nonumber \\
 & & \bF=e^{x/2}c(u)\d u\wedge\left(-\cos\frac{ye^u}{2}\d x+e^u\sin\frac{ye^u}{2}\d y\right) . \nonumber 
\eeqn
\end{remark}

\begin{remark}[Chern-Simons term]
	One can also consider generalized electrodynamics in the presence of a Chern-Simons term -- i.e., when (the dual of) the second of \eqref{UM} is replaced by $\d^*\tilde\bF+\alpha\tilde\bF\wedge\ldots\wedge\tilde\bF=0$, where $\alpha\neq0$ is an arbitrary constant, the second term of the equation contains $k$ factors $\tilde\bF$, and the corresponding number of spacetime dimensions is given by $n=p(k+1)-1$ (such modifications of Maxwell's equations take place, e.g, in the bosonic sector of minimal supergravity in five and eleven dimensions -- cf., e.g., \cite{Ortinbook} and references therein). The universal solutions of Proposition~\ref{prop_UVSI} thus clearly also solve such theories provided $k\ge2$, since $\tilde\bF\wedge\tilde\bF=0$ for a $\tilde\bF$ of type N. On the contrary, the special case $k=1$ results in a linear theory 
with $\d^*\tilde\bF+\alpha\tilde\bF=0$, which {\em cannot} be solved by a solution of \eqref{UM}. See also, e.g., \cite{FigPap01} for related results in a special case.

\end{remark}

\subsubsection{Solutions in adapted coordinates}

The result of Theorem~\ref{prop_UVSI} means that the spacetime metric can be written as 
\be
 \d s^2 =2\d u\left[\d r+H(u,r,x)\d u+W_\alpha(u,r,x)\d x^\alpha\right]+ g_{\alpha\beta}(u,x) \d x^\alpha\d x^\beta , 
 \label{Kundt_gen}
\ee
where $\bl=\partial_r$ is the Kundt vector, $\alpha,\beta=2 \dots n-1$, $x$ denotes collectively the set of coordinates $x^\alpha$, $g_{\alpha\beta}$ is an $(n-2)$-dimensional Riemannian metric of {\em constant curvature}
and
\beqn
 & & W_{\alpha}(u,r,x)=rW_{\alpha}^{(1)}(u,x)+W_{\alpha}^{(0)}(u,x) , \label{deg_Kundt1} \\
 & & H(u,r,x)=r^2H^{(2)}(u,x)+rH^{(1)}(u,x)+H^{(0)}(u,x) . \label{deg_Kundt2}
\eeqn
Further constraints following from the type III curvature conditions are (cf. the Riemann tensor components given in section~4.1 of \cite{ColHerPel06})
\beqn
 & & 2H^{(2)}=\frac{1}{2(n-2)}\left(W^{(1)\a}_{\ \ \ \ \ ||\a}+\frac{n-3}{2}W^{(1)\a}W^{(1)}_\a \right) , \label{H2} \\
 & & \R=\frac{n-3}{2}\left(W^{(1)\a}_{\ \ \ \ \ ||\a}-\frac{1}{2}W^{(1)\a}W^{(1)}_\a \right) , \label{RR} \\
 & & W_{\alpha||\beta}^{(1)}-\frac{1}{2}W_{\alpha}^{(1)}W_{\beta}^{(1)}=\frac{1}{n-2}\left(W^{(1)\g}_{\ \ \ \ \ ||\g}-\frac{1}{2}W^{(1)\g}W^{(1)}_\g \right)g_{\alpha\beta} , \\
 & & W_{\alpha}^{(1)}={\cal W}_{,\a} , \label{W} 
\eeqn
where $\R=\R(u)$ is the Ricci scalar of $g_{\alpha\beta}$ and ${\cal W}={\cal W}(u,x)$. Differentiating~\eqref{RR} w.r.t. $x^\beta$ gives $W^{(1)\a}_{\ \ \ \ \ ||\a\beta}-W^{(1)\a}W^{(1)}_{\a||\beta}=0$. In the special case when $\bl$ is recurrent (i.e., $W_{\alpha}^{(1)}=0$), eqs.~\eqref{H2}--\eqref{W} reduce to $H^{(2)}=0$ and $g_{\alpha\beta}$ must be flat.

Then any $p$-form of the form
\be
 \bF=\frac{1}{(p-1)!}f_{\alpha_1\ldots\alpha_{p-1}}(u,x)\d u\wedge\d x^{\alpha_1}\wedge\ldots\wedge\d x^{\alpha_{p-1}} ,
 \label{F_N_coords}
\ee
solves identically the equations~\eqref{UM} provided
\be
   f_{[\a_2\ldots\a_{p-1},\a_1]}=0 , \qquad  (\sqrt{\tilde g}\,f^{\beta\a_1\ldots\a_{p-2}})_{,\beta}=0 , 
	\label{Maxwell}
\ee
where $\tilde g\equiv\det g_{\alpha\beta}=-\det g_{ab}\equiv -g$ (effectively, eqs.~\eqref{Maxwell} are the Maxwell equations for the $(p-1)$-form $\bff$ in the $(n-2)$-dimensional Riemannian geometry associated with $g_{\alpha\beta}$).

\subsection{Universal $p$-forms in direct product spacetimes}

\label{subsec_dirprod}

One can take direct products of a spacetime permitted by Theorem~\ref{prop_UVSI}, say $\d s_1^2$, with any Riemannian space $\d s_2^2$ and obtain universal $p$-forms in a more general background $\d s^2=\d s_1^2+\d s_2^2$, provided $\bF$ still lives in $\d s_1^2$ (explicitly, this means that $\d s_1^2$ is given by~\eqref{Kundt_gen}, while $\d s_2^2=h_{AB}(y)\d y^A\d y^B$, with $h_{AB}(y)$ arbitrary, and the rest is unchanged). The spacetime $\d s^2$ is still degenerate Kundt but, in this case, its Weyl and traceless-Ricci types can also be II. A simple example with $p=2$ and $n=6$ is given by
\beqn
 & & \d s^2=2\d u[\d r+H(u,\zeta,\bar\zeta)\d u]+2\d\zeta\d\bar\zeta+a^2(\d\theta^2+\sin^2\theta\d\phi^2) , \nonumber \\
 & & \bF=\d u\wedge[f(u,\zeta)\d\zeta+\bar f(u,\bar\zeta)\d\bar\zeta] , \nonumber
\eeqn
where $H$ and $f$ are arbitrary functions of their arguments and $a$ a constant.

\subsection{Universal 2-forms in 4-dimensional Kundt type II spacetimes}

Another class of examples can be found for  four-dimensional Kundt spacetimes of Weyl/Ricci type II. Consider the type II Kundt spacetime (gravitational wave in a type D electrovac background \cite{GarAlvar84,Khlebnikov86,Lewand92,Ortaggio02}): 
\be
 \d s^2 =2\d u\left[\d r+\left(\lambda r^2+H(u,x,y)\right)\d u\right]+ P^{-2}(\d x^2+\d y^2) , \qquad P=1-k(x^2+y^2), 
 \label{KundtII}
\ee
with the 2-form field 
\be
 \bF= a\d u\wedge\d r+b{\bf V}_2+\d u\wedge \d \phi, \qquad \phi=\phi(u,x,y), 
 \label{F_KundtII}
\ee 
and ${\bf V}_2=P^{-2}\d x \wedge\d y$ 
is the volume form on the transverse space. Furthermore, $a$, $b$, $k$ and $\lambda$ are constants.  
Let us split the exterior derivative, $\d$, into derivatives over the two parts so that $\d=\d_1+\d_2$. Then we note $\d u\wedge \d \phi=\d u\wedge \d_2 \phi$. 
Clearly, $\d \bF =0$, and requiring $\d  * \bF=0$, implies $\Box_2\phi=0$, where $\Box_2 =- *_2\d_2 *_2 \d_2 $ ($ *_2$ is the Hodge dual on the transverse space) is the Laplacian on the transverse space (cf. also \cite{PodOrt03,Kadlecovaetal09}). 
\begin{proposition}
The 2-form field $\bF$ \eqref{F_KundtII} in the Kundt metric \eqref{KundtII}, where $\Box_2\phi=0$, is universal.
\end{proposition}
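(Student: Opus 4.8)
The plan is to decompose $\bF$ in \eqref{F_KundtII} into a covariantly constant piece plus a null piece, and treat them separately using results already established. Write $\bF=\bF_{\rm cc}+\bF_{\rm N}$, where $\bF_{\rm cc}=a\,\d u\wedge\d r+b\,{\bf V}_2$ and $\bF_{\rm N}=\d u\wedge\d_2\phi$. The first step is to check that $\bF_{\rm cc}$ is covariantly constant. Since $\d u$ is null and recurrent in the Kundt metric \eqref{KundtII} and $\d u\wedge\d r$ is (up to the boost-weight-zero part of the metric) the natural b.w.-zero 2-form on the $(u,r)$ block, while ${\bf V}_2$ is the volume form of the constant-curvature transverse space, a direct computation of $\nabla\bF_{\rm cc}$ in the frame of Appendix~\ref{app_D_Kundt} shows it vanishes. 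By Proposition~\ref{prop_covcost}, $\bF_{\rm cc}$ on its own is therefore universal and CSI. The second step is to observe that $\bF_{\rm N}=\d u\wedge\d_2\phi$ is a null (type N) 2-form aligned with $\bl=\pa_r$, and that the Kundt metric \eqref{KundtII} is of aligned Weyl and traceless-Ricci type II (not III), so Theorem~\ref{prop_UVSI} does \emph{not} directly apply to $\bF_{\rm N}$; one only gets that $\bF_{\rm N}$ is $0$-universal by Proposition~\ref{prop_0Unull} once $\Box_2\phi=0$ guarantees it solves Maxwell's equations.

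The heart of the argument is then to show that \emph{any} $\tilde\bF$ built polynomially from $\bF=\bF_{\rm cc}+\bF_{\rm N}$ and its covariant derivatives is co-closed. The key structural facts to exploit are: (i) $\bF_{\rm cc}$ is covariantly constant, so every covariant derivative of $\bF$ equals the corresponding covariant derivative of $\bF_{\rm N}$, which is VSI and carries only b.w. $-1$; (ii) all nonzero scalar invariants of $\bF$ come purely from $\bF_{\rm cc}$ (the mixed and pure-$\bF_{\rm N}$ invariants vanish by b.w. counting, since $\bF_{\rm N}$ and its derivatives have negative b.w.) and are constant; (iii) the b.w.-zero part of the Riemann tensor of \eqref{KundtII} has the same algebraic ``$\Lambda$-like'' structure as in \eqref{riem_spec}, being built from $\lambda$, $k$ and the metric, with $DR=0=\delta_i R$ still holding. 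With these, expand $\tilde\bF$ in powers of $\bF_{\rm cc}$ and derivatives of $\bF_{\rm N}$: a term with two or more factors of $\nabla^{(\cdot)}\bF_{\rm N}$ has b.w. $\le-2$ and cannot be a 2-form, so it must be absent; hence $\tilde\bF$ is at most linear in the derivatives of $\bF_{\rm N}$, with coefficients that are polynomials in $\bF_{\rm cc}$ (equivalently, constant-coefficient combinations of $\bF_{\rm cc}$, ${\bf V}_2$, $\d u\wedge\d r$ and the constant invariants). One is thus reduced to two kinds of pieces: a polynomial in the covariantly constant $\bF_{\rm cc}$ alone — which is again covariantly constant, hence co-closed, exactly as in Proposition~\ref{prop_covcost} — and terms of the schematic form $(\text{covariantly constant 2-form or scalar})\wedge/\!\cdot[\nabla^I\bF_{\rm N}]$.

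For the mixed terms one repeats, essentially verbatim, the reduction carried out in the proof of Theorem~\ref{prop_UVSI}: because $\nabla^I\bF_{\rm N}$ is VSI with only b.w. $-1$ and all covariant derivatives of the Riemann tensor are balanced, the generalized Ricci identity \eqref{gen_ricci} lets one commute covariant derivatives at the cost of ``correction'' terms of the form $R\,[\nabla^{I-2}\bF_{\rm N}]$ (and $\lambda$- or $k$-proportional analogues coming from the b.w.-zero curvature of \eqref{KundtII}), with $R_{;a}\propto\ell_a$ contributing nothing to a b.w.-$(-1)$ divergence; iterating brings everything down to $\bF_{\rm N}$ itself, to (derivatives of) $\delta\bF_{\rm N}=0$, or to $g^{ab}\nabla_b\nabla_a\bF_{\rm N}$, which the Weitzenb\"ock identity \eqref{Weitzenbock} reduces to a multiple of $\bF_{\rm N}$ using $\Delta\bF_{\rm N}=0$. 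Wedging or contracting these surviving pieces with the covariantly constant factors does not spoil co-closedness, since $\nabla$ passes through the covariantly constant factors and $\delta(\text{cc}\wedge\text{co-closed}) $ is controlled by the co-closedness of the second factor together with $\d\bF_{\rm cc}=0$. The main obstacle I anticipate is precisely bookkeeping the b.w.-zero curvature terms of \eqref{KundtII}: unlike the pure type III case of Theorem~\ref{prop_UVSI}, here Riem$_0$ contains the genuine type-II/type-D background (the $\lambda$ and $k$ terms), so one must verify that all such terms produced by \eqref{gen_ricci} either are proportional to $\bF_{\rm N}$ (harmless, being co-closed) or carry b.w. $<-1$ (killed by the $[\,\cdot\,]_p$ projection) — i.e. that no b.w.-zero curvature contraction can raise the boost weight of a derivative of $\bF_{\rm N}$ back to $0$, which follows again from antisymmetry forcing any 2-form built this way to have b.w. exactly $-1$.
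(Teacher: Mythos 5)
Your proposal rests on the claim that $\bF_{\rm cc}=a\,\d u\wedge\d r+b\,{\bf V}_2$ is covariantly constant in the metric \eqref{KundtII}, and this is false whenever $H\neq0$ (i.e., in precisely the nontrivial cases the proposition is about). A one-line check: with $F=\d u\wedge\d r$ one has $\nabla_u F_{ux}=-\Gamma^r_{ux}F_{ur}=-H_{,x}$, which does not vanish. Equivalently, in the null frame used in the paper's proof, $\nabla_c(\ell_{[a}n_{b]})=N_{i1}\,\ell_c\,\ell_{[a}m^{(i)}_{b]}$ and $\nabla_c(m^{(2)}_{[a}m^{(3)}_{b]})$ likewise picks up $N_{i1}$-terms; the coefficients $N_{i1}$ contain $\delta_iH$ and $\lambda r$ and are nonzero because \eqref{KundtII} is a genuine wave on the symmetric ($H=0$) background, not a direct product. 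Only for $H=0$ is your $\bF_{\rm cc}$ covariantly constant. This single error propagates through the whole argument: Proposition~\ref{prop_covcost} is not applicable to $\bF_{\rm cc}$; the covariant derivatives of $\bF$ do \emph{not} reduce to those of $\bF_{\rm N}$ (they acquire extra boost-weight $-1$ pieces built from $N_{i1}$, i.e., from $H$, rather than from $\phi$); and the step ``$\nabla$ passes through the covariantly constant factors'' fails. The terms you are thereby discarding are exactly the nontrivial content of the proposition, since they are the channel through which the wave profile $H$ could in principle spoil $\delta\tilde\bF=0$, and your proposal offers no mechanism to control them.

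The correct conclusions you reach along the way (that $(\tilde\bF)_0$ is a constant-coefficient combination of $\bl\wedge\bn$ and $\bbm^{(2)}\wedge\bbm^{(3)}$, and that $\tilde\bF$ is at most linear in the negative-boost-weight data) have to be justified differently. The paper does this by noting that the b.w.~0 components of $\bF$ and of all curvature tensors are those of the $H=0$ symmetric space, hence constant and $O(1,1)\times O(2)$-invariant, which pins down $(\tilde\bF)_0$ as above (each piece being closed and co-closed by direct computation, \emph{not} by covariant constancy); and by establishing that all $\nabla^I\bF$ are balanced and all $\nabla^I(\mbox{Riem})$ are $1$-balanced (Lemmas~\ref{lemma_deriv} and \ref{lemma_1deriv}, using $DF_{1i}=0$, $DN_{i1}=0$, $D^2L_{11}=0$, $DR_{1i1j}=0$), so that the curvature contributes no b.w.~$-1$ component and $(\tilde\bF)_{-1}$ can only be $\bl\wedge\bigl(\hat C\,\d_2(\Box_2)^I\phi+\hat D\,*_2\d_2(\Box_2)^I\phi\bigr)$, which vanishes for $I>0$ and is closed and co-closed for $I=0$. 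If you want to salvage your decomposition, you must replace ``covariantly constant'' by ``b.w.~0 part with constant, $O(1,1)\times O(2)$-invariant components whose first derivative is balanced'', and then explicitly show that the $N_{i1}$-generated b.w.~$-1$ terms cannot contribute to the divergence of a $2$-form; as it stands, that is a missing (and essential) step.
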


\begin{proof}

Let us start with some preliminary comments, to be used in the following. We introduce the null Kundt co-frame $\ell_a\d x^a=\d u$, $n_a\d x^a=(\lambda r^2+H)\d u+\d r$, $m^{(2)}_{a}\d x^a=P^{-1}\d x$, $m^{(3)}_{a}\d x^a=P^{-1}\d y$, so that $\d s^2=2\bl\bn+\bbm^{(2)}\bbm^{(2)}+\bbm^{(3)}\bbm^{(3)}$, and $\bF=a\bl\wedge\bn+b\bbm^{(2)}\wedge\bbm^{(3)}+\bl\wedge\d\phi$. We observe that \eqref{KundtII} is a degenerate Kundt spacetime and therefore the function $H$ does not enter the b.w. 0 components of any curvature tensors \cite{ColHerPel10}, which thus coincide with those of the symmetric space obtained by setting $H=0$ in \eqref{KundtII} (which are constant). Moreover, in the above frame, in addition to \eqref{Kundt}, one has  $L_{10}=\M{i}{j}{0}=N_{i0}=L_{1i}=L_{1i}=L_{i1}=\M{i}{j}{1}=N_{ij}=0$, $R_{101i}=R_{1ijk}=0$, $D\M{i}{j}{k}=DN_{i1}=D^2L_{11}=0$ (cf. \eqref{11n} and \eqref{11f}), $DR_{1i1j}=0$ (cf. \eqref{B4}) and $DF_{1i}=0$ (and $H$ enters only the  Ricci rotation coefficient $N_{i1}$). This implies that all the covariant derivatives of $\bF$ are {\em balanced} tensors (Definition~\ref{def_balanced} and Lemma~\ref{lemma_deriv}) and all the covariant derivatives of the Riemann tensor are {\em 1-balanced} tensors (Definition~\ref{def_1balanced} and Lemma~\ref{lemma_1deriv}).

 Let us consider an arbitrary 2-form $\tilde\bF$ constructed from $\bF$. Then it can be expressed as: 
\[\tilde\bF =A\bl\wedge\bn+B \bbm^{(2)}\wedge \bbm^{(3)}+\bl\wedge\left(C\d_2\phi+D *_2\d_2 \phi\right), \]
where $A, B, C, D$ are constants. Recalling $\Box_2\phi=0$, it is clear that $\tilde\bF$ obeys $\d\tilde\bF=\d *\tilde\bF=0$; hence, $\bF$ is 0-universal. 

As for $\tilde\bF$ from derivatives, we note that the boost-weight zero components of $\bF$ and of any curvature tensor are $O(1,1)\times O(2)$-symmetric. Moreover, the b.w. 0 component of $\nabla^I\bF$ {vanish}. 
This implies that any tensor $\tilde\bF$ must have $O(1,1)\times O(2)$-symmetric b.w.~0 components as well. Hence, 
\[(\tilde\bF)_0 =\hat A\bl\wedge\bn+\hat B \bbm^{(2)}\wedge \bbm^{(3)},\]
where $\hat A$ and $\hat B$ are constants,
and thus  $\d(\tilde\bF)_0=\d *(\tilde\bF)_0=0$. Next, {since} curvature tensors (including the derivatives) are of the form $R=(R)_0+(R)_{-2}+...$, the b.w. $-1$ component of $\tilde\bF$ needs to be of the form (recall the notation of \eqref{tF}): 
\[ 
 (\tilde\bF)_{-1}=\left[(R)_0\otimes (\bF)_0\otimes\ldots\otimes(\bF)_0\otimes(\bF+\nabla\bF+\ldots+\nabla^I\bF)_{-1}\right]_2. 
 \] 
Now, as observed above, all the b.w. zero components need to be $O(1,1)\times O(2)$-symmetric, and using the arguments in the proof of Theorem~\ref{prop_UVSI},  after the contraction, $(\tilde\bF)_{-1}$ is of the form: 
\[(\tilde\bF)_{-1} =\sum_I \bl\wedge\left(\hat C\d_2(\Box_2)^I\phi+\hat D *_2\d_2(\Box_2)^I\phi\right), \]
where $\hat C$ and $\hat D$ are constants,
which is zero if $I>0$, and if $I=0$, then $\d(\tilde\bF)_{-1}=\d *(\tilde\bF)_{-1}=0$. Hence, any $\tilde\bF$ constructed from $\bF$ obeys the Maxwell equations $\d\tilde\bF=\d *\tilde\bF=0$
\end{proof}

This shows that there are examples of universal Maxwell fields in four-dimensional type II Kundt spacetimes  and similar examples of universal $p-$forms are believed to exist in $n=2p$ dimensions {(cf. section~\ref{subsec_dirprod} for slightly different type II examples in higher dimensions)}. 
\section{Necessary conditions}

\label{sec_necessary}

\subsection{General dimension}

{Here we show that some necessary conditions can be obtained for a universal $\bF$ which is {\em not} CSI. Let us thus} assume there exists a non-constant invariant $I$, i.e., $I_{,a}\neq0$. Since $\bF$ is assumed to be universal, {\em any} $\tilde\bF$ constructed according to Definition~\ref{def_UM} must be divergenceless. Take, for instance, $\tilde\bF=I\bF$. Then one immediately obtains
\be
	I^{;a}F_{a b_1\ldots b_{p-1}}=0 .
	\label{IF}
\ee
Since $\bF$ is closed, this immediately implies that the Lie derivative of $\bF$ w.r.t. the gradient of $I$ is zero, i.e., $\pounds_{\nabla I}\bF=0$. This defines a symmetry of $\bF$.

In the special even dimensional case $n=2p$, one can also take $\tilde\bF=I{}^*\bF$, thus additionally obtaining
\be
	I^{;a}{}^*F_{a b_1\ldots b_{p-1}}=0 \qquad	(n=2p) .
	\label{I*F}
\ee
(And now also $\pounds_{\nabla I}{}^*\bF=0$.)

Eqs.~\eqref{IF} and \eqref{I*F} mean that, for $n=2p$, a universal $\bF$ which possesses a non-constant invariant $I$ must be null and aligned to the gradient of $I$, which in turn defines a null vector field (see Definition~\ref{def_N}) 
\be
 \ell_a\equiv I_{,a}  \qquad	(n=2p) .
\ee
It is also clear that all non-constant invariants must define the same null direction (since a null $p$-form admits a unique aligned null direction). Since all possible $\tilde\bF$ must be closed and co-closed when $n=2p$, one can similarly argue that they must also be null and aligned with $\bl$ -- this observation will be useful in the following. Note also that, since $\bl$ is null and a gradient, it is automatically geodesic and twistfree. One also has $D\hat I=0$ for any invariant $\hat I$ (and $\hat I=\hat I(u)$ if coordinates are chosen such that $\ell_a\d x^a=\d u$). We have thus arrived at
\begin{proposition}[Necessary conditions for a universal $\bF$ in the case $n=2p$]
\label{prop_univ_necess}
	In a spacetime of dimension $n=2p$, if a $p$-form $\bF$ is universal, then either (i) $\bF$ is CSI or (ii) $\bF$ (as well as any $\tilde\bF$ constructed according to Definition~\ref{def_UM}) is null and aligned with a geodesic and twistfree null direction, along which all the invariants of $\bF$ are constant.
\end{proposition}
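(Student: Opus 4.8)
The plan is to argue directly from Definition~\ref{def_UM} by feeding the divergencefree condition a carefully chosen sequence of test $p$-forms $\tilde\bF$ built from $\bF$ and a non-constant scalar invariant. First I would suppose that $\bF$ is \emph{not} CSI, so there is a polynomial invariant $I$ (possibly involving covariant derivatives of $\bF$ and of $*\bF$) with $I_{,a}\neq0$ at some point, and hence on an open set. The starting observation is that the products $I\bF$ and $I{}^*\bF$ are legitimate choices of $\tilde\bF$ in Definition~\ref{def_UM} (here we use that $n=2p$, so $*\bF$ is again a $p$-form and, by the footnote after the definition, every admissible $\tilde\bF$ is both closed and co-closed). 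Since $\d\bF=0=\d{}^*\bF$, computing $\delta(I\bF)$ and $\delta(I{}^*\bF)$ and using $\delta\bF=0=\delta{}^*\bF$ leaves only the gradient terms, yielding \eqref{IF} and \eqref{I*F}, i.e.\ $I^{;a}F_{ab_1\ldots b_{p-1}}=0$ and $I^{;a}{}^*F_{ab_1\ldots b_{p-1}}=0$.

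Next I would interpret these two algebraic conditions. Setting $\ell_a\equiv I_{,a}$, the first condition is precisely the first half of the type~N criterion in Definition~\ref{def_N}, while the second condition is the equivalent restatement of the wedge condition given there ($\ell^a{}^*F_{ab_1\ldots b_{n-p-1}}=0$). Hence $\bF$ is of type~N (null) with aligned null direction $\bl$, and in particular $\bl=\nabla I$ is a null vector, so $I^{;a}I_{,a}=0$. Uniqueness of the aligned null direction of a type~N $p$-form (Definition~\ref{def_N}/\cite{OrtPra16}) then forces every non-constant invariant $\hat I$ of $\bF$ to have gradient proportional to $\bl$; in particular $D\hat I=\ell^a\hat I_{,a}=0$, so in adapted coordinates $\ell_a\d x^a=\d u$ one has $\hat I=\hat I(u)$, which is the ``constant along $\bl$'' clause. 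The same argument applied with $I\tilde\bF$ in place of $I\bF$, for an arbitrary admissible $\tilde\bF$ (which is closed and co-closed when $n=2p$), shows every such $\tilde\bF$ is likewise null and aligned with $\bl$.

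It remains to record the geometric properties of $\bl$. Since $\bl=\nabla I$ is a gradient, it is hypersurface-orthogonal, hence twistfree; and a null gradient is automatically geodesic ($\ell^b\nabla_b\ell_a=\ell^b\nabla_b\nabla_a I=\ell^b\nabla_a\nabla_b I=\tfrac12\nabla_a(\ell^b\ell_b)-\ell^b(\nabla_b\nabla_a-\nabla_a\nabla_b)I$, and the last term vanishes as $[\nabla_a,\nabla_b]I=0$ on a scalar, while $\ell^b\ell_b=I^{;b}I_{,b}=0$ by the above, so $\ell^b\nabla_b\ell_a\propto\ell_a$). Collecting these facts gives alternative (ii) of the proposition, and alternative (i) is the negation of our standing assumption, completing the argument.

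The only genuinely delicate point is the passage from $I_{,a}\neq0$ at a point to a statement that holds identically on an open neighbourhood, so that the algebraic identities \eqref{IF}, \eqref{I*F} and the uniqueness-of-null-direction argument can all be run simultaneously; but since all the objects involved are smooth and \eqref{IF}, \eqref{I*F} hold wherever $\tilde\bF$ is defined (everywhere), this is automatic once one such $I$ is non-constant, and no real obstacle arises. Everything else is bookkeeping with Definition~\ref{def_N} and elementary identities for gradients.
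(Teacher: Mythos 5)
Your proposal is correct and follows essentially the same route as the paper's own proof: testing universality against $\tilde\bF=I\bF$ and $I{}^*\bF$ to obtain \eqref{IF}--\eqref{I*F}, reading these as the type~N conditions of Definition~\ref{def_N} for $\ell_a=I_{,a}$, and then using the gradient/null character of $\bl$ for the geodesic and twistfree properties and the constancy of invariants along $\bl$. The only differences are cosmetic (you spell out the geodesic computation and the open-set caveat slightly more explicitly than the paper does).
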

From the above proof, it is clear that a similar statement holds replacing ``universal'' by ``$K$-universal'' and ``CSI'' by ``CSI$_K$''.

\subsection{Case $n=4=2p$: a universal $\bF$ must be CSI}

In the physically most important case $n=4=2p$, the non-CSI case cannot actually occurr, as we now show. By {\em reductio ad absurdum}, let us assume that we have a non-CSI universal $\bF$. By Proposition~\ref{prop_univ_necess}, $\bF$ is null and satisfies the Maxwell equations, thanks to which the Mariot-Robinson theorem \cite{Stephanibook} ensures that $\bl$ is also shearfree. This means that the permitted backgrounds can only be the Robinson-Trautman (if $\bl$ is expanding) or the Kundt (if $\bl$ is non-expanding) spacetimes.

\subsubsection{Robinson-Trautman}

\label{subsubsec_RT}

Let us first show that the Robinson-Trautman metrics are in fact forbidden if $\bF$ is universal and non-CSI. The idea is to construct from $\bF$ a 2-form $\tilde\bF$ which cannot be null (which would be required by universality) unless $\theta=0$, thus ending up in the Kundt family. Let us employ a frame ($i,j,\ldots=2,3$) parallelly transported along $\bl$. Let us denote $f_i\equiv F_{1i}$, so that $F_{ab}=2(f_2\ell_{[a}m^{(2)}_{b]}+f_3\ell_{[a}m^{(3)}_{b]})$. 
In the following, it will be understood that the following Maxwell equation (of b.w. 0)
\be
  Df_{i}=-\theta {f_{i}} ,
	\label{DMaxw_4D}
\ee
is employed. Thanks to \eqref{DMaxw_4D}, without loss of generality one can use an $r$-independent spin to set
\be
  f_3=0 .
\ee

Using this simplification, let us consider the 2-form
\be
	\tilde F_{ab}\equiv \left(F_{cd}^{\ \ \ ;e}F^{cd;f}F_{e[a|;f}\right)\left(F_{gh}^{\ \ \ ;p}F^{gh;q}{}^*F_{p|b];q}\right)  \approx 4(\theta f_2)^6m^{(2)}_{[a}m^{(3)}_{b]} ,
	\label{FFF}
\ee
where (from now on) the symbol $\approx$ indicates equality ``up to terms of lower b.w.''. 
As observed in Proposition~\ref{prop_univ_necess}, $\tilde F_{ab}$ has to be null, so $\theta=0$, as we wanted to prove. Therefore, in four dimensions there exist no universal non-CSI $2$-forms with an aligned expanding null direction.

\subsubsection{Kundt} 

Since $\bF$ is null and $\pounds_{\bl}\bF=0$ {(as pointed out above)}, we conclude that $\bF$ is VSI$_1$ (Proposition~C.1 of \cite{OrtPra16}), i.e., the only possible non-zero invariants must be constructed from second (or higher) covariant derivatives of $\bF$.

First, let us observe that in a Kundt spacetime one has (cf. {\eqref{dl}, \eqref{dm}})
\be
 \ell_{a;bc}\approx (DL_{1i} \ell_a \msub{i}{b}+DL_{i1} \msub{i}{a} \ell_b) n_c  \label{d2l} .
\ee
We can thus consider the 2-form 
\be
 \tilde F_{ab}\equiv\ell_{[a|;c|b]}\ell^{d;c}_{\ \ \ d}\approx (DL_{1i}DL_{1i})\ell_{[a}n_{b]} ,
\ee
which must be null and aligned with $\bl$, from which we conclude $DL_{1i}=0$ (recall that $\ell_a=I_{,a}$ and thus $\tilde F_{ab}$ is indeed constructed from $F_{ab}$). Similarly, taking instead $\tilde F_{ab}\equiv\ell_{c;[ab]}\ell^{c;d}_{\ \ \ d}$, we conclude that also $DL_{i1}=0$. Thanks to this, we have that $\ell_{a;bc}$ has components at most  of b.w. $-1$, while $\msub{i}{a ; bc }$ (thanks to {\eqref{11n}}) at most of b.w. 0, so that
\be
 \ell_{a;bcd}\approx D^2L_{11} \ell_a\ell_b n_c n_d  \label{d3l} .
\ee
It is now clear that by considering the 2-form
\be
 \tilde F_{ab}\equiv\ell_{[a|;c|b]}^{\ \ \ \ \ \ \ c}\approx D^2L_{11}\ell_{[a}n_{b]} ,
\ee
one concludes that $D^2L_{11}=0$. {With $DL_{i1}=0=D L_{1i}$}, this implies that the spacetime is degenerate Kundt (see again appendix~A of \cite{OrtPra16}) and therefore $\bF$ is VSI (Theorem~1.3 of \cite{OrtPra16}), i.e., all its invariant vanish and are thus constant, leading to a contradiction.

To summarize, we have shown that in four dimensions Proposition~\ref{prop_univ_necess} takes the following stronger form
\begin{theorem}[Necessary conditions for a universal $\bF$ in the case $n=4=2p$]
\label{prop_univ_necess_4D}
	In a four-dimensional spacetime, if a $2$-form is universal, then it is CSI.
\end{theorem}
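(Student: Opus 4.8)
The plan is to argue by \emph{reductio ad absurdum}: assume a universal 2-form $\bF$ in four dimensions is not CSI, and derive a contradiction by exploiting the very strong constraints that Proposition~\ref{prop_univ_necess} places on $\bF$. The key input is that, for $n=4=2p$, a non-CSI universal $\bF$ together with \emph{every} $\tilde\bF$ constructed from it according to Definition~\ref{def_UM} must be null and aligned with the gradient $\ell_a\equiv I_{,a}$ of a non-constant invariant $I$; moreover $\bl$ is geodesic and twistfree, and since $\bF$ solves Maxwell's equations, the Mariot--Robinson theorem \cite{Stephanibook} forces $\bl$ to be also shearfree. Hence the permitted backgrounds reduce to just two families: Robinson--Trautman (if $\theta\neq0$) and Kundt (if $\theta=0$). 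The strategy is then to rule out Robinson--Trautman entirely, and to show that within the Kundt class one is forced into the degenerate Kundt subclass, where $\bF$ becomes VSI and hence CSI --- the contradiction.

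First I would dispose of the Robinson--Trautman case. Working in a frame parallelly transported along $\bl$ and using the b.w.-0 Maxwell equation $Df_i=-\theta f_i$, one can use an $r$-independent spin to arrange $f_3=0$, so that only $f_2$ survives. Then one writes down an explicit polynomial 2-form $\tilde\bF$ built from covariant derivatives of $\bF$ whose leading (highest b.w.) term is proportional to $(\theta f_2)^6\, m^{(2)}_{[a}m^{(3)}_{b]}$, i.e. a \emph{b.w.-0} piece. Since universality forces $\tilde\bF$ to be null (hence to have no b.w.-0 component), one concludes $\theta f_2=0$; as $f_2\neq0$ (else $\bF=0$) this gives $\theta=0$, contradicting the Robinson--Trautman assumption. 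The design of such a $\tilde\bF$ is where a genuinely clever choice is needed --- one wants a scalar-free contraction of three derivative factors whose b.w.-0 part survives the null alignment of $\bF$ itself --- but once a candidate is exhibited the verification of its leading term is a routine b.w. bookkeeping.

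Next, within the Kundt case, I would climb the ``derivative ladder''. Since $\bF$ is null with $\pounds_{\bl}\bF=0$, it is VSI$_1$ (Proposition~C.1 of \cite{OrtPra16}), so the only possible non-constant invariants come from $\nabla^{\ge2}\bF$. Because $\ell_a=I_{,a}$, any 2-form built polynomially from $\nabla\ell$, $\nabla^2\ell$, $\ldots$ is admissible as a $\tilde\bF$ and must therefore be null and $\bl$-aligned. Using the Kundt form of $\ell_{a;bc}$, one constructs $\tilde\bF\sim\ell_{[a|;c|b]}\,\ell^{d;c}{}_d$, whose leading term is $(DL_{1i}DL_{1i})\,\ell_{[a}n_{b]}$; nullity forces $DL_{1i}=0$, and a ``transposed'' contraction gives $DL_{i1}=0$. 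With these, $\ell_{a;bcd}$ collapses to a term $\propto D^2L_{11}\,\ell_a\ell_b n_c n_d$, and the 2-form $\ell_{[a|;c|b]}{}^{\;c}$ forces $D^2L_{11}=0$. The conditions $DL_{1i}=DL_{i1}=D^2L_{11}=0$ are precisely the criterion (appendix~A of \cite{OrtPra16}) for the Kundt spacetime to be \emph{degenerate}, whence $\bF$ is VSI (Theorem~1.3 of \cite{OrtPra16}) --- all its invariants vanish, so it is CSI, contradicting the standing assumption. Combining the two cases, the non-CSI possibility is impossible and the theorem follows.

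The main obstacle I anticipate is the Robinson--Trautman step: one needs to \emph{guess} a concrete polynomial combination of covariant derivatives of $\bF$ whose highest-b.w. component is both nonzero (generically) and of b.w.\ $0$ (so that nullity of $\tilde\bF$ yields a nontrivial equation). Naively many such contractions vanish identically because of the null alignment of $\bF$ and the parallelly-transported frame, so the choice of which indices to contract and which to antisymmetrize is delicate; the sextic power of $\theta f_2$ in \eqref{FFF} suggests that a fair amount of experimentation lies behind it. By contrast, the Kundt step is essentially forced: each rung of the ladder is dictated by the structure of $\ell_{a;b\ldots}$ in a Kundt geometry, and the only real work is keeping track of which Ricci rotation coefficient appears at the leading b.w.\ order.
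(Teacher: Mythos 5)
Your proposal is correct and follows essentially the same route as the paper's own proof: the same reductio via Proposition~\ref{prop_univ_necess} and Mariot--Robinson, the same elimination of the Robinson--Trautman branch through a sextic curvature-derivative 2-form with a non-null b.w.-0 part forcing $\theta=0$, and the same Kundt derivative ladder yielding $DL_{1i}=DL_{i1}=D^2L_{11}=0$, hence degenerate Kundt, VSI, and the contradiction. No gaps to report.
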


We will now use this result to further constraint universal Maxwell fields in four dimensions, discussing separately the case of null and non-null fields. It will be convenient to employ the complex Newman-Penrose formalism, with the convention of \cite{Stephanibook}.\footnote{In the remaining part of this section, and only here, we will thus have $\ell_a n^a=-1$, instead of our usual convention $\ell_a n^a=+1$.} In a complex frame $(\bl,\bn,\mbox{\boldmath{$m$}},\mbox{\boldmath{$\bar m$}})$, such that $g_{ab}=2m_{(a}\bar m_{b)}-2\ell_{(a}n_{b)}$, the Maxwell equations read \cite{Stephanibook}
\beqn
 D\Phi_1-\bar\delta\Phi_0  & = & (\pi-2\alpha)\Phi_0+2\rho\Phi_1-\kappa\Phi_2 \label{max1} , \\
 D\Phi_2-\bar\delta\Phi_1 & = & -\lambda\Phi_0+2\pi\Phi_1+(\rho-2\varepsilon)\Phi_2 \label{max2} , \\
 \delta\Phi_1-\T\Phi_0 & = & (\mu-2\gamma)\Phi_0+2\tau\Phi_1-\sigma\Phi_2 \label{max3} , \\
 \delta\Phi_2-\T\Phi_1 & = & -\nu\Phi_0+2\mu\Phi_1+(\tau-2\beta)\Phi_2 \label{max4} .
\eeqn

\subsection{Case $n=4=2p$: null $\bF$}

\label{subsec_null4D}

In an adapted frame, the self-dual 2-form ${\cal F}_{ab}=F_{ab}+i{}^*F_{ab}$ takes the form \cite{Stephanibook}
\be
 {\cal F}_{ab}=4\Phi_2 \ell_{[a}m_{b]} ,
\label{F_null}
\ee
where $\bl$ is geodesic and shearfree by the Mariot-Robinson theorem, and $\Phi_2\neq0$. With no loss of generality we can choose an affine parametrization and a frame parallelly transported along $\bl$, so that 
\be
 \kappa=\sigma=\varepsilon=\pi=0 .
\ee

We can thus write the covariant derivatives of the frame vectors $\bl$ and $\mbox{\boldmath{$m$}}$ (up to terms of lower b.w.) as 
\beqn
 & & \ell_{a;b}\approx -\rho\bar m_a m_b-\bar\rho m_a \bar m_b , \\
 & & m_{a;b}\approx -\rho n_a m_b ,
\eeqn
while \eqref{max2} gives 
\be
 D\Phi_2=\rho\Phi_2 .
\ee
From these expressions and \eqref{F_null}, it follows
\be
 \big({\cal F}_{cd;e}{\cal F}^{cd}_{\ \ [;a|}\big)\big({\cal \bar F}_{fg;|b]}{\cal\bar F}^{fg;e}\big)\approx 256|\rho\Phi_2|^4 m_{[a}\bar m_{b]} . 
\ee
By adding to this 2-form its dual (times $i$), we end up with the following self-dual non-null 2-form
\be
 {\cal\tilde F}_{ab}\approx 4\tilde\Phi_1 (m_{[a}\bar m_{b]}-\ell_{[a}n_{b]}) , \qquad \tilde\Phi_1=64|\rho\Phi_2|^4 ,
 \label{F1_0}
\ee 
for which $\bl$ is a PND. By universality, ${\cal\tilde F}_{ab}$ must also solve Maxwell's equations, while Proposition~\ref{prop_univ_necess_4D} implies that $\tilde\Phi_1$ is a constant. Eq.~\eqref{max1} thus reduces to $0=2\rho\tilde\Phi_1$, which gives  
\be
 \rho=0 .
\ee
This means that $\bl$ is a Kundt vector field. The Ricci identities (cf. eqs. (7.21k), (7.21d), (7.21e) of \cite{Stephanibook}) thus give
\be
 D(\beta-\bar\alpha)=0 . 
\ee

Using the above results, $\ell_{a;b}$ contains only terms of b.w. $-1$ (or less) and $m_{a;b}$ of b.w. $0$ (or less), while for the second covariant derivatives we get
\beqn
 & & \ell_{a;bc}\approx -D{(\bar\beta+\alpha)}\ell_a m_bn_c-D\bar\tau m_a \ell_bn_c+\mbox{c.c.} , \\
 & & m_{a;bc}\approx -D\tau n_a \ell_bn_c .
\eeqn
Eq.~\eqref{max2} now implies $D\Phi_2=0$, which with the commutator (7.6b,\cite{Stephanibook}) also gives $D\delta\Phi_2=0$. We thus obtain
\be
 {\cal F}_{ab;c}^{\ \ \ \ c}\approx -4\Phi_2D\tau(m_{[a}\bar m_{b]}-\ell_{[a}n_{b]}) .
\ee 
Similarly as for \eqref{F1_0}, this 2-form must obey Maxwell's equation with $\Phi_2D\tau$ being a constant, which requires (cf. \eqref{max3})
\be
 D\tau=0 .
\ee
This implies that the spacetime is of aligned Riemann type II, while from the Bianchi and Ricci identities it further follows\footnote{For brevity, we refer to Appendix~A of \cite{OrtPra16} for more details (using footnote~4 of \cite{OrtPraPra07} for a dictionary between the complex and real Ricci rotation coefficients).}
\be
 D(\beta+\bar\alpha)=0 , \qquad D^2\mu=0 ,  \qquad D^2\lambda=0 , \qquad D^2(\gamma-\bar\gamma)=0 .
\ee 
Therefore, we arrive at
\be
 \ell_{a;bcd}\approx -D^2(\gamma+\bar\gamma)\ell_a\ell_b n_cn_d .
\ee
For our purposes, it suffices to further observe that $\ell_{a;b}$ and $\ell_{a;bc}$ possess only terms of negative b.w., while $m_{a;b}$, $m_{a;bc}$, $m_{a;bcd}$ and (using (7.6a,\cite{Stephanibook})) $\Phi_{2;abc}$ only terms of b.w. $0$ (or less). It is then easy to obtain 
\be
 ({\cal F}_{ab;d}^{\ \ \ \ d})_{;c}\approx 4\Phi_2D^2(\gamma+\bar\gamma)\ell_{[a} m_{b]}n_c .
\ee 
By universality, the 2-form ${\cal F}_{ab;d}^{\ \ \ \ d}$ must satisfy Maxwell's equations, which implies
\be
 D^2(\gamma+\bar\gamma)=0 .
\ee 

This means that the spacetime is degenerate Kundt, aligned with $\bl$. Thanks to \cite{OrtPra16}, we conclude that 
\begin{proposition}[Necessary conditions for a universal null $\bF$ in the case $n=4=2p$]
\label{prop_null_univ_necess_4D}
	In a four-dimensional spacetime, if a $2$-form is null and universal, then it is VSI.
\end{proposition}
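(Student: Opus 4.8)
\textbf{Proof proposal for Proposition~\ref{prop_null_univ_necess_4D}.}

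The plan is to argue by \emph{reductio ad absurdum}, following closely the pattern used in the preceding subsections. Suppose $\bF$ is null and universal but not VSI. Since $\bF$ is universal, Theorem~\ref{prop_univ_necess_4D} already tells us that $\bF$ is CSI; the goal is therefore to rule out the ``genuinely CSI but not VSI'' possibility, i.e.\ to push the constancy of invariants all the way down to their vanishing. Because $\bF$ is null and closed, Proposition~C.1 of \cite{OrtPra16} gives $\pounds_{\bl}\bF=0$ and VSI$_1$, so the first non-trivial invariants must come from second covariant derivatives of $\bF$. The whole strategy is to show, one order of differentiation at a time, that the background is in fact degenerate Kundt aligned with $\bl$, because then Theorem~1.3 of \cite{OrtPra16} forces $\bF$ to be VSI, contradicting the assumption.

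The key steps, in order, are as follows. First, invoke the Mariot–Robinson theorem: a null solution of Maxwell's equations has $\bl$ geodesic and shearfree, so the permitted backgrounds are Robinson–Trautman or Kundt. Second, exclude the expanding (Robinson–Trautman) case exactly as in section~\ref{subsubsec_RT}: build from $\bF$ a $\tilde\bF$ whose leading b.w.\ piece is proportional to a power of $\theta$ times $m^{(2)}_{[a}m^{(3)}_{b]}$, a non-null algebraic type; by Proposition~\ref{prop_univ_necess} every $\tilde\bF$ must be null, forcing $\theta=0$. So $\bl$ is a Kundt vector. Third, work in the Newman–Penrose formalism with an affinely parametrized, parallelly transported frame ($\kappa=\sigma=\varepsilon=\pi=0$) and write ${\cal F}_{ab}=4\Phi_2\ell_{[a}m_{b]}$ with $\Phi_2\neq0$. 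Using the Maxwell equation \eqref{max2} (which gives $D\Phi_2=\rho\Phi_2$) and the expressions for $\ell_{a;b}$, $m_{a;b}$, construct the self-dual non-null $\tilde\bF$ with $\tilde\Phi_1\propto|\rho\Phi_2|^4$; universality plus Theorem~\ref{prop_univ_necess_4D} make $\tilde\Phi_1$ constant, and feeding this into \eqref{max1} yields $\rho=0$, i.e.\ $\bl$ is a Kundt congruence in the strict sense. Fourth, iterate: with $\rho=0$, second derivatives of $\bl$ and of $\Phi_2$ acquire the restricted b.w.\ structure displayed in section~\ref{subsec_null4D}; the divergence ${\cal F}_{ab;c}{}^{c}$ then has leading part $\propto \Phi_2 D\tau\,(m_{[a}\bar m_{b]}-\ell_{[a}n_{b]})$, whose constancy (again via Theorem~\ref{prop_univ_necess_4D}) together with Maxwell \eqref{max3} gives $D\tau=0$. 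This promotes the spacetime to aligned Riemann type~II and, through the Bianchi and Ricci identities, to $D^2\mu=D^2\lambda=D^2(\gamma-\bar\gamma)=0$ and $D(\beta+\bar\alpha)=0$. Fifth and last, repeat once more at the level of third derivatives: $({\cal F}_{ab;d}{}^{d})_{;c}\approx 4\Phi_2 D^2(\gamma+\bar\gamma)\,\ell_{[a}m_{b]}n_c$ must solve Maxwell's equations, forcing $D^2(\gamma+\bar\gamma)=0$. Together with the earlier vanishing conditions this means the Kundt spacetime is \emph{degenerate} and aligned with $\bl$; by Theorem~1.3 of \cite{OrtPra16}, $\bF$ is then VSI, contradicting the standing assumption and completing the proof.

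The main obstacle, and the part requiring the most care, is the bookkeeping in steps four and five: one must verify that at each stage the only surviving obstruction to degeneracy is a single real b.w.\ coefficient ($DL_{1i}$ and $DL_{i1}$, then $D^2L_{11}$, in the real notation; equivalently $D\tau$, then $D^2(\gamma+\bar\gamma)$, in the complex one), and that the auxiliary $\tilde\bF$ one constructs from $\bF$ genuinely isolates that coefficient in its top b.w.\ component while all lower-b.w.\ contributions are irrelevant (they are killed by the requirement that $\tilde\bF$ be null and aligned, per Proposition~\ref{prop_univ_necess}). Keeping track of which NP quantities are constrained by the Maxwell equations \eqref{max1}–\eqref{max4} versus which follow from the Ricci and Bianchi identities — and invoking \cite{OrtPra16} (with the dictionary of \cite{OrtPraPra07}) for the latter rather than re-deriving them — is what makes the argument terminate cleanly rather than cascading into ever-higher derivatives. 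Once the degenerate Kundt conclusion is reached, the appeal to Theorem~1.3 of \cite{OrtPra16} is immediate.
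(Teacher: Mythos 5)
Your proposal reproduces the paper's own argument step by step (Mariot--Robinson giving $\bl$ geodesic and shearfree, the $|\rho\Phi_2|^4$ 2-form whose constant $\tilde\Phi_1$ forces $\rho=0$ via \eqref{max1}, then $D\tau=0$ and $D^2(\gamma+\bar\gamma)=0$ from ${\cal F}_{ab;c}{}^{c}$ and its derivative, hence degenerate Kundt aligned with $\bl$ and VSI by \cite{OrtPra16}); the \emph{reductio} framing is cosmetic, since the chain of implications is direct. One caveat: your second step justifies $\theta=0$ by invoking Proposition~\ref{prop_univ_necess} to claim that every $\tilde\bF$ must be null, but that conclusion belongs to the non-CSI branch (ii) of that proposition, which is unavailable here precisely because Theorem~\ref{prop_univ_necess_4D} places you in branch (i) --- the step is harmless only because it is redundant, your third step already yielding $\rho=0$ (hence $\theta=0$) without ever needing $\tilde\bF$ to be null.
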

{
This is clearly a specialization of Theorem~\ref{prop_univ_necess_4D} to the case of null fields, also constraining the background spacetime to be degenerate Kundt (aligned) \cite{OrtPra16}.}

\section*{Acknowledgments}

M.O. and V.P. have been supported by research plan RVO: 67985840 and by the Albert Einstein Center for Gravitation and Astrophysics, Czech Science Foundation GACR 14-37086G. The stay of M.O. at Instituto de Ciencias F\'{\i}sicas y Matem\'aticas, Universidad Austral de Chile has been supported by CONICYT PAI ATRACCI{\'O}N DE CAPITAL HUMANO AVANZADO DEL EXTRANJERO Folio 80150028. S.H. was supported through the Research Council of Norway, Toppforsk grant no. 250367: \emph{Pseudo-Riemannian Geometry and Polynomial Curvature Invariants:
	Classification, Characterisation and Applications.}

\renewcommand{\thesection}{\Alph{section}}
\setcounter{section}{0}

\renewcommand{\theequation}{{\thesection}\arabic{equation}}

\section{Kundt spacetimes and balanced tensors}
\setcounter{equation}{0}

\label{app_D_Kundt}

{In this appendix we review certain known properties of Kundt spacetimes and of balanced tensors useful in this paper. Some new facts are also proven (when a proposition is a summary of known results, this is indicated by including the corresponding reference in the text of the proposition).}

\subsection{Notation}

\label{app_sub_not}

In an $n$-dimensional spacetime, we employ a frame which consists of two null vectors $\bl\equiv{\mbox{\boldmath{$m_{(0)}$}}}$,  $\bn\equiv{\mbox{\boldmath{$m_{(1)}$}}}$ and $n-2$ orthonormal spacelike vectors $\bm{i} $, with $a, b\ldots=0,\ldots,n-1$ while $i, j  \ldots=2,\ldots,n-1$. For indices $i, j, \ldots$, it is not necessary to distinguish between subscripts and superscripts. 
The Ricci rotation coefficients are defined by \cite{Pravdaetal04}
\be
 L_{ab}=\ell_{a;b} , \qquad N_{ab}=n_{a;b}  , \qquad \M{i}{a}{b}=m^{(i)}_{a;b} ,
 \label{Ricci_rot}
\ee
and satisfy the identities 
\be 
 L_{0a}=N_{1a}=N_{0a}+L_{1a}=\M{i}{0}{a} + L_{ia} = \M{i}{1}{a}+N_{ia}=\M{i}{j}{a}+\M{j}{i}{a}=0 . 
 \label{const-scalar-prod}
\ee
Covariant derivatives along the frame vectors are 
\be
D \equiv \ell^a \nabla_a, \qquad \T\equiv n^a \nabla_a, \qquad \delta_i \equiv m^{(i)a} \nabla_a . 
 \label{covder}
\ee

\subsection{Kundt spacetimes}

Here we summarize some properties of Kundt spacetimes (see, e.g., Appendix~A of \cite{OrtPra16} for more details and references). 

\subsubsection{General Kundt spacetimes}

Kundt spacetimes are defined by the existence of a null vector field $\bl$ such that 
\be
 L_{i0}=0, \qquad L_{ij}=0  . 
 \label{Kundt}
\ee
Without loss of generality, one can use an affine parametrization and a frame parallelly transported along $\bl$, such that, additionally,  \cite{Pravdaetal04,Coleyetal04vsi,OrtPraPra07}
\be
 L_{10}=0, \qquad \M{i}{j}{0}=0, \qquad N_{i0}=0 . 
 \label{Kundt_rot}
\ee

The covariant derivatives of the frame vectors then read 
\beqn
& & \ell_{a ; b }=L_{11} \ell_a \ell_b +L_{1i} \ell_a \msub{i}{b}+L_{i1} \msub{i}{a} \ell_b  \label{dl} , \\
& & \msub{i}{a ; b }=-{N}_{i1} \ell_a \ell_b-{L}_{i1} n_a \ell_b-{N}_{ij} \ell_a \msub{j}{b}+\M{i}{j}{1} \msub{j}{a} \ell_b  + \M{i}{k}{l} \msub{k}{a} \msub{l}{b} , \label{dm} \\
& & n_{a ; b }=-L_{11} n_a \ell_b -L_{1i} n_a \msub{i}{b}+N_{i1} \msub{i}{a} \ell_b  + N_{ij} \msub{i}{a} \msub{j}{b} \label{dn} .
\eeqn

From \eqref{Kundt}, \eqref{Kundt_rot}, it follows
\be
 R_{0i0j}=0 , \qquad R_{0ijk}=0 . 
\label{RiemIb}
\ee

The Ricci identities (11b), (11e), (11n), (11a), (11j), (11m) and (11f) of \cite{OrtPraPra07} reduce to
\beqn
 & & DL_{1i}=-R_{010i} , \qquad DL_{i1}=-R_{010i} , \label{11be} \\
 & & D\M{i}{j}{k}=0 , \label{11n} \\
 & & DL_{11}=-L_{1i} L_{i1}-R_{0101}, \label{11a} \\
 & & DN_{ij}=-R_{0j1i} \label{11j} , \\
 & &  D\M{i}{j}{1}=-\M{i}{j}{k}L_{k1}-R_{01ij} , \label{11m} \\
 & &  DN_{i1}=-N_{ij}L_{j1}+R_{101i} \label{11f} . 
\eeqn

\subsubsection{Kundt spacetimes of aligned Riemann type II and degenerate Kundt spacetimes }

A generic Kundt spacetime is of Riemann type I (cf. \eqref{RiemIb}). For the subclass of {\em Riemann type II}, (i.e., assuming that $R_{010i}=0$), the Bianchi identities (B3), (B5), (B12), (B1), (B6) and (B4) of \cite{Pravdaetal04} take the simpler form
\beqn
 & & DR_{01ij}=0 , \qquad DR_{0i1j}=0 , \qquad DR_{ijkl}=0 , \label{B12} \\ 
 & & D R_{101i}-\delta_i R_{0101}=-R_{0101} L_{i1}-R_{01is} L_{s1}-R_{0i1s} L_{s1}  \label{B1} , \\
 & & D R_{1kij}+\delta_k R_{01ij}=R_{01ij}L_{k1}-2R_{0k1[i} L_{j]1}+R_{ksij} L_{s1}-2R_{01[i|s} \M{s}{|j]}{k} , \label{B6} \\
 & & D R_{1i1j}-\T R_{0j1i}-\delta_j R_{101i}=R_{0101} N_{ij}-R_{01is} N_{sj}+R_{0s1i} N_{sj}+R_{0j1s} \M{s}{i}{1}+R_{0s1i} \M{s}{j}{1}  \nonumber \\
 & &		\qquad\qquad\qquad\qquad\qquad\qquad\qquad		 {}+2R_{101i} L_{[1j]}+R_{1ijs} L_{s1}+R_{101s} \M{s}{i}{j} .   \label{B4}  
\eeqn

The {\em degenerate} Kundt metrics \cite{ColHerPel09a,Coleyetal09}, which are a subset of the Kundt metrics of Riemann type II, are defined by
\begin{definition}[Degenerate Kundt metrics \cite{ColHerPel09a,Coleyetal09}] 
	A Kundt spacetime is ``degenerate'' if the Kundt null direction $\bl$ is also a multiple null direction of the Riemann tensor and of its covariant derivatives of arbitrary order (which are thus all of aligned type~II, or more special). 
	\label{def_deg}
\end{definition}

It is useful to recall the following
\begin{proposition}[Conditions for degenerate Kundt metrics \cite{OrtPra16}]
\label{prop_Kundt_deg}
A Kundt spacetime is degenerate iff it is of aligned Riemann type II and $\ell^a R_{,a}=0$. A Kundt spacetime for which the tracefree part of the Ricci tensor is of aligned type III must be degenerate.
\end{proposition}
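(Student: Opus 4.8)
The plan is to prove both halves of Proposition~\ref{prop_Kundt_deg} by using the Bianchi identities \eqref{B12}--\eqref{B4} together with the Ricci identities \eqref{11be}--\eqref{11f}, exploiting the fact that in a Kundt spacetime the only obstruction to being ``degenerate'' is the propagation equation for the boost-weight $+1$ curvature components along $\bl$. The key observation is that the b.w.\ $+1$ part of the Riemann tensor in a Kundt spacetime is encoded in $R_{010i}$ (all other positive-b.w.\ components vanish by \eqref{RiemIb} and antisymmetry), and by \eqref{11be} this quantity controls $DL_{1i}$ and $DL_{i1}$.

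For the first (``iff'') statement, one direction is essentially definitional: if the spacetime is degenerate then by Definition~\ref{def_deg} the Riemann tensor is of aligned type~II, so $R_{010i}=0$, and moreover $\nabla R$ has $\bl$ as a multiple null direction, which in particular forces its b.w.\ $+1$ component to vanish; extracting the appropriate trace gives $\ell^a R_{,a}=DR=0$. For the converse, suppose the spacetime is Kundt, of aligned Riemann type~II (so $R_{010i}=0$), and $DR=0$. The goal is to show that every covariant derivative $\nabla^{(k)}\mathrm{Riem}$ is of aligned type~II, i.e.\ has no components of positive b.w.\ By \eqref{11be}, $R_{010i}=0$ gives $DL_{1i}=DL_{i1}=0$; then the Bianchi identities \eqref{B12} show that the b.w.\ $0$ curvature components $R_{01ij}$, $R_{0i1j}$, $R_{ijkl}$ are all constant along $\bl$. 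One now proceeds inductively: a component of $\nabla^{(k)}\mathrm{Riem}$ of boost weight $b$ is a sum of a $D$-derivative term plus lower-b.w.\ material coming from the Ricci rotation coefficients in \eqref{covder}, and the positive-b.w.\ rotation coefficients $L_{1i},L_{i1}$ have already been shown to be $D$-constant while the negative-b.w.\ ones only lower b.w.\ further. The Bianchi identities (schematically, the ``$D$'' equations) then let one conclude that the $b=+1$ components of $\nabla^{(k)}\mathrm{Riem}$ vanish provided those of $\nabla^{(k-1)}\mathrm{Riem}$ do; the hypothesis $DR=0$ is precisely what is needed to kill the one trace term that would otherwise survive at the first step. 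Hence $\bl$ is a multiple null direction of $\nabla^{(k)}\mathrm{Riem}$ for all $k$, i.e.\ the metric is degenerate Kundt.

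For the second statement, assume the tracefree Ricci tensor $S_{ab}$ is of aligned type~III, so $S_{ab}$ has only components of b.w.\ $\le -1$; in particular $S_{0i}=0$, which together with $R_{0i0j}=0$ (from \eqref{RiemIb}) and the decomposition of Riem into Weyl plus Ricci parts means that the b.w.\ $+1$ curvature content reduces to that of the Weyl tensor. One then shows, using the contracted Bianchi identity (conservation of the Einstein tensor) and the type~III assumption on $S_{ab}$, that the b.w.\ $+1$ part $C_{010i}$ of the Weyl tensor is also forced to vanish, so that $R_{010i}=0$ and the spacetime is of aligned Riemann type~II; and the same input gives $DR=0$ (the Ricci scalar trace is constant along $\bl$ because the type~III condition removes the source term). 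Then the ``iff'' part just proven applies and yields degeneracy.

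The main obstacle I expect is the bookkeeping in the inductive step: one must verify carefully that, when a covariant derivative $\delta_j$ or $\T$ is applied in building up $\nabla^{(k)}\mathrm{Riem}$, no new positive-b.w.\ term is generated beyond what the Bianchi identities \eqref{B1}, \eqref{B6}, \eqref{B4} already control, and that the potentially dangerous terms all involve either $L_{1i},L_{i1}$ (known $D$-constant) or a curvature trace (handled by $DR=0$). This is a standard but somewhat delicate boost-weight/Bianchi-identity argument of the kind already present in \cite{OrtPra16, Coleyetal09}; since the proposition is explicitly attributed there, the cleanest route in the paper is to cite those references for the detailed induction and present here only the reduction just sketched.
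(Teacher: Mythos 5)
First, a point of reference: the paper gives no proof of this proposition at all --- it is explicitly recalled from \cite{OrtPra16} (cf.\ the convention announced at the start of Appendix~\ref{app_D_Kundt}), so your attempt can only be compared with the standard argument in the cited literature. For the first (``iff'') sentence your route is that standard one and is sound as a sketch: the forward direction is essentially Definition~\ref{def_deg} plus taking traces, and for the converse the combination of \eqref{11be} (giving $DL_{1i}=DL_{i1}=0$ once $R_{010i}=0$), the Bianchi identities \eqref{B12}, and $DR=0$ (which kills the one surviving b.w.~$+1$ frame component of $\nabla(\mathrm{Riem})$, namely $DR_{0101}$) is the correct first step, with the full induction legitimately deferred to \cite{OrtPra16,Coleyetal09}. (A small slip: $L_{1i}$ and $L_{i1}$ have boost weight $0$, not positive boost weight.)

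The genuine gap is in your proof of the second sentence. You propose to derive $C_{010i}=0$ from the contracted Bianchi identity, but a differential identity cannot force an algebraic curvature component to vanish pointwise; with the type~III assumption on the tracefree Ricci tensor the relevant Bianchi equation only yields $DR_{010i}=0$, i.e.\ constancy of that component along $\bl$, not its vanishing. The step you are missing is purely \emph{algebraic} and uses the Kundt conditions \eqref{RiemIb} more strongly than you do: since $R_{0ijk}=0$, the Ricci trace in a null frame reads
\begin{equation*}
R_{0i}=g^{cd}R_{d0ci}=-R_{010i}+\textstyle\sum_j R_{0jij}=-R_{010i},
\end{equation*}
so in \emph{any} Kundt spacetime ``aligned Riemann type II'' is equivalent to $R_{0i}=0$; equivalently, $C_{010i}=-\frac{n-3}{n-2}R_{0i}$ is not an independent quantity there. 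Hence $S_{0i}=0$ (type III tracefree Ricci) gives $R_{010i}=0$ immediately, with no Bianchi identity involved. The contracted Bianchi identity \emph{is} needed, but only for the other hypothesis of the first sentence: contracting $R_{ab}{}^{;b}=\tfrac12 R_{;a}$ with $\ell^a$ and using $R_{ab}\ell^a=\tfrac{R}{n}\ell_b$ (which holds because $S_{ab}$ has only negative b.w.\ components), $\ell^a{}_{;a}=0$ and \eqref{dl}, one gets $\tfrac{1}{n}DR=\tfrac12 DR$, hence $DR=0$ for $n\neq 2$. With these two facts the second sentence reduces to the first, as you intended.
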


\subsection{Balanced tensors in degenerate Kundt spacetimes}

\label{app_subsec_balanced} 

Let us start by recalling the following
\begin{definition}[Balanced scalars and tensors \cite{Pravdaetal02,Coleyetal04vsi}]
\label{def_balanced}
	In a frame parallely transported along an affinely parameterized geodesic null vector field $\bl$, a scalar $\eta$ of b.w. $b$ under a constant boost is a ``balanced scalar'' if $D^{-b}\eta=0$ for $b<0$ and $\eta = 0$ for $b\geq0$. A ``balanced tensor'' is a tensor whose components are all balanced scalars.
\end{definition}

\begin{definition}[1-balanced scalars and tensors \cite{HerPraPra14}]
\label{def_1balanced}
	In a frame parallely transported along an affinely parameterized geodesic null vector field $\bl$, a scalar $\eta$ of b.w. $b$ under a constant boost is a ``1-balanced scalar'' if $D^{-b-1}\eta=0$ for $b<-1$ and $\eta = 0$ for $b\geq-1$. A tensor whose components are all 1-balanced scalars is a ``1-balanced tensor''.
\end{definition}
Clearly, a 1-balanced tensor is also balanced and possesses non-zero components only of b.w. $-2$ or lower.

Below we will need {the following} result of \cite{OrtPra16}
\begin{lemma}[Derivatives of balanced tensors in degenerate Kundt spacetimes \cite{OrtPra16}]
\label{lemma_deriv}
 In a degenerate Kundt spacetime, the covariant derivative of a balanced tensor is again a balanced tensor.
\end{lemma}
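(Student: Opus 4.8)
The statement to prove is Lemma~\ref{lemma_deriv}: in a degenerate Kundt spacetime, the covariant derivative of a balanced tensor is again balanced.

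The plan is to work in the frame parallelly transported along the affinely parametrized geodesic Kundt null vector $\bl$, so that all the Ricci rotation identities \eqref{Kundt}, \eqref{Kundt_rot} hold, and to track boost weights under the constant boost that rescales $\bl$. Let $\bT$ be a balanced tensor, so that each frame component $T_{\ldots}$ of b.w.\ $b$ satisfies $D^{-b}T_{\ldots}=0$ when $b<0$ and $T_{\ldots}=0$ when $b\ge 0$. I want to show the same property holds for the components of $\nabla\bT$. First I would write out a generic frame component of $\nabla\bT$ using \eqref{covder}: a component $(\nabla\bT)_{c\,a_1\ldots a_k}$ in the frame equals a directional derivative ($D$, $\T$, or $\delta_i$) of a component of $\bT$, minus terms in which a Ricci rotation coefficient (one of $L_{ab}$, $N_{ab}$, $\M{i}{a}{b}$) multiplies a component of $\bT$ with shuffled indices — schematically $\nabla\bT \sim (\text{directional derivative of }\bT) + \Gamma\cdot\bT$.

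The key step is a boost-weight bookkeeping for each of these two kinds of terms. For the $\Gamma\cdot\bT$ terms one needs to know the b.w.\ of the nonvanishing Ricci rotation coefficients in a degenerate Kundt spacetime. Using \eqref{dl}, \eqref{dm}, \eqref{dn} one reads off that $L_{11}$ has b.w.\ $-2$, $L_{1i},L_{i1},\M{i}{j}{1},N_{ij}$ have b.w.\ $-1$, $N_{i1}$ has b.w.\ $0$, and $\M{i}{k}{l}$ has b.w.\ $0$; crucially there are no Ricci rotation coefficients of positive b.w.\ (this is where the Kundt conditions, not yet degeneracy, are used). Hence in a term $\Gamma\cdot\bT$ of total b.w.\ $b$, the factor from $\bT$ has b.w.\ $b'\ge b$; if $b\ge 0$ then $b'\ge 0$ and that factor vanishes by balancedness, and if $b<0$ then $D^{-b'}$ annihilates the $\bT$-factor, while $D$ applied to the $\Gamma$-factor cannot raise b.w.\ (again no positive-b.w.\ coefficients, and $DN_{i1}$ has b.w.\ $0$ by \eqref{11f} combined with $R_{101i}$ having b.w.\ $-1$ in Riemann type~II), so a suitable power of $D$ kills the product by the Leibniz rule. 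For the directional-derivative terms: $\delta_i$ and $\T$ commute with $D$ only up to curvature and lower-b.w.\ terms; here one invokes degeneracy. Since the spacetime is degenerate Kundt, by Proposition~\ref{prop_Kundt_deg} it is of aligned Riemann type~II, so the Riemann tensor has no components of positive b.w., and therefore the commutators $[D,\delta_i]$ and $[D,\T]$ — expressible via Ricci rotation coefficients and curvature — produce only terms of b.w.\ not exceeding that of the operand. This lets one push the required power of $D$ through $\delta_i$ or $\T$ acting on a component of $\bT$, reducing to $D^{m}$ hitting a $D$- or non-$D$-derivative of $\bT$, which vanishes by the balanced assumption (using also $DR=0$ in the degenerate case, and more generally that all covariant derivatives of Riem are of aligned type~II).

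The main obstacle I expect is precisely the commutator analysis for the $\delta_i$ and $\T$ directions: one must verify carefully that in a degenerate Kundt spacetime $[D,\delta_i]$ and $[D,\T]$, when acting on a balanced scalar of b.w.\ $b$, yield a sum of balanced scalars (and in particular do not generate b.w.\ $>b$), which is exactly the point where Definition~\ref{def_deg} (degeneracy of all covariant derivatives of the Riemann tensor, not merely of Riem itself) is needed. Organizationally I would first dispose of the easy cases ($b\ge 0$, where everything in sight vanishes, and the pure $D$-derivative, which is immediate from the definition), then handle the $\Gamma\cdot\bT$ terms by the b.w.\ counting above, and finally treat $\delta_i\bT$ and $\T\bT$ via the commutator identities together with \eqref{B12}--\eqref{B4}. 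This mirrors the structure already used implicitly in the proof of Theorem~\ref{prop_UVSI}, so the argument can be kept short by pointing to those curvature properties.
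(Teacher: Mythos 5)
Your overall skeleton is the right one, and it is worth noting that the paper itself does not actually prove this lemma: it is quoted from \cite{OrtPra16}, with the intended argument encoded in the material placed immediately after it, namely the commutators \eqref{TD}--\eqref{dD}, the identities \eqref{Ricci_D_0} and \eqref{Ricci_D_-1-2}, and the balanced-scalar Lemma~\ref{lemma_1balanced} (whose ``balanced'' rather than ``1-balanced'' analogue is the real engine). Expanding a frame component of $\nabla\bT$ via \eqref{covder} and \eqref{dl}--\eqref{dn} into directional derivatives of components of $\bT$ plus rotation-coefficient terms, and then doing boost-weight bookkeeping term by term, is exactly how the proof goes.

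There is, however, a genuine gap in the execution, and it sits precisely where the whole argument lives: the boost weights. Remembering that $\ell_{a;b}$ carries overall weight $+1$ and $n_{a;b}$ weight $-1$, the nonvanishing coefficients in \eqref{dl}--\eqref{dn} have b.w.\ $0$ for $L_{1i},L_{i1},\M{i}{k}{l}$, b.w.\ $-1$ for $L_{11},N_{ij},\M{i}{j}{1}$, and b.w.\ $-2$ for $N_{i1}$ --- not $L_{11}$ at $-2$, $L_{1i},L_{i1}$ at $-1$, and $N_{i1}$ at $0$ as you state. This is not cosmetic: for a term $\Gamma\eta$ with $\Gamma$ a rotation coefficient of b.w.\ $c\le0$ and $\eta$ balanced, the Leibniz expansion of $D^{-b}(\Gamma\eta)$ closes only if $D^{-c+1}\Gamma=0$. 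With your weight $0$ for $N_{i1}$ you would need $DN_{i1}=0$, which is false in a general degenerate Kundt spacetime (only $D^{3}N_{i1}=0$ holds, consistent with the true weight $-2$). The phrase ``$D$ applied to the $\Gamma$-factor cannot raise b.w.'' is not the needed fact ($D$ always raises b.w.\ by one); what is needed, and what your argument never establishes, are exactly the identities \eqref{Ricci_D_0} and \eqref{Ricci_D_-1-2}, which follow from the Ricci identities \eqref{11be}--\eqref{11f} and the Bianchi identities \eqref{B12}--\eqref{B4} once the Riemann tensor is of aligned type~II. Relatedly, degeneracy enters there rather than through curvature terms in the commutators: acting on scalars (which frame components are), $[\T,D]$ and $[\delta_i,D]$ are given purely by rotation coefficients as in \eqref{TD}--\eqref{dD}, with no Riemann term, and pushing powers of $D$ through $\delta_i$ and $\T$ again only requires $DL_{1i}=DL_{i1}=0$ and $D^2L_{11}=0$. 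Fixing the weight table and explicitly invoking \eqref{Ricci_D_0}--\eqref{Ricci_D_-1-2} would make your argument complete.
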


In a degenerate Kundt spacetime, employing an affine parameter and a parallelly transported frame, one has \cite{OrtPra16}
\beqn
 & & \T D - D \T = L_{11} D + L_{i1} \delta_i ,  \label{TD} \\
 & & \delta_i D - D \delta_i=L_{1i} D , \label{dD} \\
 & & DL_{1i}=0 , \qquad DL_{i1}=0 , \qquad D\M{i}{j}{k}=0 , \label{Ricci_D_0} \\
 & & D^2N_{ij}=0 , \qquad D^2\M{i}{j}{1}=0 , \qquad D^2L_{11}=0 , \qquad  D^3N_{i1}=0 . \label{Ricci_D_-1-2}
\eeqn 
The above equations suffice to readily extend Lemma~4.2 of \cite{HerPraPra14} to the following
\begin{lemma}[1-balanced scalars in degenerate Kundt spacetimes]
	\label{lemma_1balanced}
	In a degenerate Kundt spacetime, employing an affine parameter and a parallelly transported frame, if $\eta$ is a 1-balanced scalar of b.w. $b$, then all the following scalars (ordered by b.w.) are also 1-balanced: $D\eta$; $L_{1i}\eta$, $L_{i1}\eta$, $\M{i}{j}{k}		\eta$, $\delta_i\eta$; $L_{11}\eta$, $N_{ij}\eta$, $\M{i}{j}{1}\eta$, $\T \eta$; $N_{i1}\eta$.	
\end{lemma}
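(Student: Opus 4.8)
\textbf{Proof plan for Lemma~\ref{lemma_1balanced}.}
The plan is to mirror the structure of the proof of Lemma~4.2 of \cite{HerPraPra14}, substituting the ingredients collected in \eqref{TD}--\eqref{Ricci_D_-1-2} at each step. Recall that a scalar $\eta$ of b.w.\ $b$ is 1-balanced precisely when $D^{-b-1}\eta=0$ (interpreting this as $\eta=0$ when $b\ge-1$). So the task for each scalar $\xi$ listed (of b.w.\ $b'$, say) is to verify $D^{-b'-1}\xi=0$. Since each of the listed scalars is either $D\eta$ (b.w.\ $b$, as $D$ has b.w.\ $+1$ and raising the b.w.\ by one lowers the required number of $D$'s by one), a Ricci coefficient times $\eta$ (b.w.\ $b-1$ for $L_{1i}\eta$, $L_{i1}\eta$, $\M{i}{j}{k}\eta$, $\delta_i\eta$; etc., following the b.w.\ of the coefficient), or $\T\eta$ (b.w.\ $b-1$), the required power of $D$ is in each case one more than $-b-1$, i.e.\ $-b$, except for $D\eta$ where it is $-b-1$ again but applied to $D\eta$.

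First I would dispatch $D\eta$: from $D^{-b-1}\eta=0$ we get $D^{-(b)-1}(D\eta)=D^{-b-1}(D\eta)=D^{-b}\eta=D\,(D^{-b-1}\eta)=0$, using that $D$ commutes with itself. Next, for the coefficients of b.w.\ $0$ multiplying $\eta$, namely $L_{1i}\eta$, $L_{i1}\eta$, $\M{i}{j}{k}\eta$ (these have b.w.\ $b-1$): I need $D^{-b}(L_{1i}\eta)=0$ and similarly. Here one uses the Leibniz rule for $D$ together with $DL_{1i}=DL_{i1}=D\M{i}{j}{k}=0$ from \eqref{Ricci_D_0}, so that $D^{-b}(L_{1i}\eta)=L_{1i}D^{-b}\eta=L_{1i}\,D\,(D^{-b-1}\eta)=0$, and likewise for the other two. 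For $\delta_i\eta$ (b.w.\ $b-1$): I need $D^{-b}(\delta_i\eta)=0$; using the commutator \eqref{dD} in the form $D\delta_i=\delta_i D-L_{1i}D$ one pushes the $D$'s through, picking up only terms of the form $(\text{power of }D\text{ applied to }L_{1i})\times(\text{power of }D\text{ applied to }\eta)$; since $D L_{1i}=0$ the $L_{1i}$ factor survives only with zero derivatives and the $\eta$ factor then carries $-b$ derivatives, which kills it, while the leading term $\delta_i D^{-b}\eta=\delta_i\,D\,(D^{-b-1}\eta)=0$.

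For the b.w.\ $-1$ coefficients times $\eta$, namely $L_{11}\eta$, $N_{ij}\eta$, $\M{i}{j}{1}\eta$ (b.w.\ $b-1$), and for $\T\eta$ (also b.w.\ $b-1$): I need $-b$ applications of $D$ to annihilate each. For $L_{11}\eta$: since $D^2L_{11}=0$, the Leibniz expansion of $D^{-b}(L_{11}\eta)$ leaves only the two terms with at most one derivative on $L_{11}$, each carrying at least $-b-1$ derivatives on $\eta$, both of which vanish by $D^{-b-1}\eta=0$; the same argument applies verbatim to $N_{ij}\eta$ and $\M{i}{j}{1}\eta$ using $D^2N_{ij}=0=D^2\M{i}{j}{1}$. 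For $\T\eta$: use \eqref{TD} to commute $\T$ past the $D$'s, $D\T=\T D-L_{11}D-L_{i1}\delta_i$; iterating produces the leading term $\T D^{-b}\eta=\T\,D\,(D^{-b-1}\eta)=0$ plus correction terms involving $L_{11}$ or $L_{i1}$ (and their $D$-derivatives, which are controlled by $D^2L_{11}=0$ and $DL_{i1}=0$) multiplied by strictly more than $-b-1$ derivatives of $\eta$, hence zero, with the $\delta_i$ terms handled as in the previous paragraph. Finally, for $N_{i1}\eta$ (b.w.\ $b-2$, needing $-b+1$ applications of $D$): since $D^3N_{i1}=0$, the Leibniz expansion of $D^{-b+1}(N_{i1}\eta)$ retains only terms with at most two derivatives on $N_{i1}$, hence at least $-b-1$ derivatives on $\eta$, all vanishing.

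The argument is essentially bookkeeping, so the ``main obstacle'' is only to be careful that every correction term generated by commuting $\T$ or $\delta_i$ past the $D$'s (via \eqref{TD}, \eqref{dD}) indeed lands $\eta$ under at least $-b-1$ copies of $D$ after the finitely many $D$-derivatives on the Ricci coefficients have been exhausted using \eqref{Ricci_D_0}--\eqref{Ricci_D_-1-2}; this is guaranteed because each commutator trades one ``outer'' $D$ for a Ricci-coefficient prefactor while leaving the total differential weight available for $\eta$ no smaller than before. One also notes the assignment $\eta=0$ for b.w.\ $\ge-1$ is consistent: if $b\ge-1$ then $\eta=0$ and all listed scalars are trivially $0$, except possibly $D\eta=0$ as well, so nothing extra is needed.
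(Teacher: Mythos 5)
Your overall strategy---Leibniz's rule combined with the commutators \eqref{TD}, \eqref{dD} and the identities \eqref{Ricci_D_0}--\eqref{Ricci_D_-1-2}---is exactly what the paper intends (it offers no more detail than saying those equations ``suffice''), and your treatment of the groups $L_{11}\eta$, $N_{ij}\eta$, $\M{i}{j}{1}\eta$, $\T\eta$ (b.w.\ $b-1$) and $N_{i1}\eta$ (b.w.\ $b-2$) is correct. However, your boost-weight bookkeeping for the first two groups is off by one, and this matters for the property you are checking. Since $D$ carries b.w.\ $+1$, the scalar $D\eta$ has b.w.\ $b+1$ (not $b$); and since $L_{1i}$, $L_{i1}$, $\M{i}{j}{k}$ and $\delta_i$ all carry b.w.\ $0$, the scalars $L_{1i}\eta$, $L_{i1}\eta$, $\M{i}{j}{k}\eta$, $\delta_i\eta$ have b.w.\ $b$ (not $b-1$). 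Your assignment is in fact internally inconsistent: you give $L_{1i}\eta$ and $L_{11}\eta$ the same weight even though $L_{1i}$ and $L_{11}$ differ by one unit, and you give $D\eta$ weight $b$ right after asserting that $D$ has b.w.\ $+1$.

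The consequence is that for these entries you verify the wrong power of $D$. For instance, 1-balancedness of $L_{1i}\eta$ (b.w.\ $b$) requires $D^{-b-1}(L_{1i}\eta)=0$, whereas you establish only $D^{-b}(L_{1i}\eta)=0$; annihilation by a \emph{higher} power of $D$ does not imply annihilation by the required lower power, so as written you have shown only that these scalars are balanced in the sense of Definition~\ref{def_balanced}, not 1-balanced. The gap is harmless because the correct statements follow from the very same computations with one fewer $D$: $D^{-b-2}(D\eta)=D^{-b-1}\eta=0$; $D^{-b-1}(L_{1i}\eta)=L_{1i}D^{-b-1}\eta=0$ by \eqref{Ricci_D_0}; and $D^{-b-1}(\delta_i\eta)=\delta_i D^{-b-1}\eta+(b+1)L_{1i}D^{-b-1}\eta=0$ by \eqref{dD} and $DL_{1i}=0$ (with the obvious degenerate cases $b\ge -2$ handled by $\eta=0$ or $D\eta=0$). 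With that correction the argument is complete and follows the paper's (implicit) route.
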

When taking the covariant derivative of a 1-balanced tensor in a degenerate Kundt spacetime, only the 1-balanced scalars encompassed by Lemma~\ref{lemma_1balanced} will appear (cf. \eqref{dl}--\eqref{dn}), which immediately leads to an extension of Lemma~4.3 of \cite{HerPraPra14}
\begin{lemma}[Derivatives of 1-balanced tensors in degenerate Kundt spacetimes]
\label{lemma_1deriv}
 In a degenerate Kundt spacetime, the covariant derivative of a 1-balanced tensor is again a balanced 1-tensor.
\end{lemma}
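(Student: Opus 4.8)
\textbf{Proof proposal for Lemma~\ref{lemma_1deriv}.}

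The plan is to mimic exactly the strategy used for Lemma~\ref{lemma_deriv} (and already employed in \cite{HerPraPra14} for the non-degenerate analogue), now promoting ``balanced'' to ``1-balanced''. Let $\bT$ be a 1-balanced tensor in a degenerate Kundt spacetime, with the frame chosen affinely parametrized and parallelly transported along $\bl$ (so that \eqref{Kundt}, \eqref{Kundt_rot} and the relations \eqref{TD}--\eqref{Ricci_D_-1-2} hold). First I would write out $\nabla\bT$ component-wise in this frame: a generic component of $\nabla\bT$ is a sum of a directional derivative ($D$, $\T$ or $\delta_i$) of a component of $\bT$, plus terms in which a component of $\bT$ is multiplied by one of the Ricci rotation coefficients appearing in \eqref{dl}--\eqref{dn}, namely $L_{11}$, $L_{1i}$, $L_{i1}$, $N_{i1}$, $N_{ij}$, $\M{i}{j}{1}$, $\M{i}{j}{k}$ (the remaining coefficients vanish in the Kundt parallelly-transported frame). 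Each such component of $\bT$ is, by hypothesis, a 1-balanced scalar of some b.w. $b$.

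The key step is then simply to invoke Lemma~\ref{lemma_1balanced}: every scalar that can occur as a component of $\nabla\bT$ is, up to the b.w. bookkeeping, one of the scalars $D\eta$, $L_{1i}\eta$, $L_{i1}\eta$, $\M{i}{j}{k}\eta$, $\delta_i\eta$, $L_{11}\eta$, $N_{ij}\eta$, $\M{i}{j}{1}\eta$, $\T\eta$, $N_{i1}\eta$ with $\eta$ a 1-balanced scalar, and Lemma~\ref{lemma_1balanced} guarantees that each of these is again 1-balanced (of the appropriate b.w., which is automatically consistent since raising a tensor index or acting with a directional derivative shifts the b.w. in exactly the way matched by the boost weights of the rotation coefficients). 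Hence every component of $\nabla\bT$ is a 1-balanced scalar, i.e. $\nabla\bT$ is a 1-balanced tensor. Since a 1-balanced tensor is in particular balanced (it has nonzero components only of b.w. $\le-2$, cf. the remark after Definition~\ref{def_1balanced}), the statement follows; the phrase ``balanced 1-tensor'' in the statement is to be read as ``1-balanced tensor''.

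The only point requiring a little care — and the main (minor) obstacle — is the bookkeeping of boost weights: one must check that in every product ``rotation coefficient $\times$ component of $\bT$'' appearing in \eqref{dl}--\eqref{dn}, the combined b.w. is exactly the b.w. of the corresponding component of $\nabla\bT$, so that the hypotheses of Lemma~\ref{lemma_1balanced} apply with the correct $b$. This is immediate from the fact that $L_{1i},\M{i}{j}{k}$ have b.w. $0$, that $L_{11},N_{ij},\M{i}{j}{1}$ have b.w. $-1$, and $N_{i1}$ has b.w. $-2$ (equivalently $L_{i1}$ has b.w. $+1$ but always appears contracted so as to lower the b.w.), matching precisely the lists in Lemma~\ref{lemma_1balanced}; the directional derivatives $D$, $\delta_i$, $\T$ carry b.w. $0$, $0$, $-1$ respectively, again as in that lemma. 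No genuinely new computation is needed beyond what \eqref{TD}--\eqref{Ricci_D_-1-2} already provide.
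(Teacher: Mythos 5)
Your argument is correct and essentially coincides with the paper's own justification, which simply observes that, by \eqref{dl}--\eqref{dn}, only the scalars covered by Lemma~\ref{lemma_1balanced} can appear in the frame components of the covariant derivative of a 1-balanced tensor (and, yes, ``balanced 1-tensor'' in the statement is to be read as ``1-balanced tensor''). One minor slip in your bookkeeping: $L_{i1}=\ell_{a;b}\,m^{(i)a}n^{b}$ has b.w.\ $0$ (not $+1$), which is precisely why Lemma~\ref{lemma_1balanced} groups $L_{i1}\eta$ with $L_{1i}\eta$, $\delta_i\eta$ at b.w.\ $b$; this does not affect the validity of your argument.
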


In a degenerate Kundt spacetime, the tensors $\nabla^{(I)}$(Riem) are generically of aligned type II \cite{ColHerPel09a,Coleyetal09}. However, with a restriction on the Weyl and Ricci tensors one can prove the following stronger condition (which extends Proposition~5.1 of \cite{HerPraPra14}). 
\begin{proposition}[]
 \label{prop_Kundt_III}
In a Kundt spacetime in which the Weyl tensor and the tracefree part of the Ricci tensor are both of aligned type III or more special (implying Kundt degenerate), the tensors $\nabla^{(I)}$(Riem) are balanced for any positive integer $I$, and therefore they are all of aligned type III (or more special).
\end{proposition}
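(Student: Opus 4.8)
The plan is to prove Proposition~\ref{prop_Kundt_III} by induction on $I$, using the Bianchi identities of a Kundt spacetime of aligned Riemann type II together with the boost-weight counting encoded in the notion of a balanced tensor (Definition~\ref{def_balanced}). First I would establish the base case $I=1$: under the hypotheses the Weyl tensor and the tracefree Ricci are of aligned type III, so the full Riemann tensor has boost-weight $0$ components built only from the (constant) Ricci scalar, i.e. it has the form~\eqref{riem_spec}. In particular $R_{010i}=0$, so the spacetime is of aligned Riemann type II and, by Proposition~\ref{prop_Kundt_deg}, degenerate Kundt; moreover $DR=0=\delta_i R$. One then checks that $\nabla\mbox{(Riem)}$ has no components of b.w.\ $\ge0$ (the b.w.-$0$ and positive part of $\nabla\mbox{(Riem)}$ would have to come from $D$ acting on the b.w.-$0$ or positive part of Riem, but that part is the covariantly constant expression~\eqref{riem_spec}), and that the surviving b.w.-$(-1)$ components are killed by $D$. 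The latter is exactly the content of the Bianchi identities \eqref{B12}--\eqref{B4}: with $R_{010i}=0$ one reads off $DR_{01ij}=DR_{0i1j}=DR_{ijkl}=0$ directly from \eqref{B12}, and the b.w.-$(-1)$ frame components of $\nabla\mbox{(Riem)}$ are precisely (traces and contractions of) $DR_{01ij}$, $DR_{0i1j}$, $\delta_i R_{0101}$, etc.; using $\delta_i R_{0101}=\frac{2}{n(n-1)}\delta_i R\,\propto\ell_i=0$ and \eqref{B1}, \eqref{B6} one concludes $D$ annihilates every b.w.-$(-1)$ component, so $\nabla\mbox{(Riem)}$ is balanced.

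For the inductive step, I would assume $\nabla^{(I)}\mbox{(Riem)}$ is balanced and show $\nabla^{(I+1)}\mbox{(Riem)}=\nabla(\nabla^{(I)}\mbox{(Riem)})$ is balanced. But this is immediate once the base case is in hand, because Lemma~\ref{lemma_deriv} states exactly that in a degenerate Kundt spacetime the covariant derivative of a balanced tensor is balanced, and we have already observed that the spacetime is degenerate Kundt. The final clause — that a balanced tensor is of aligned type III or more special — follows because a balanced tensor has vanishing components of b.w.\ $\ge 0$, which is precisely the definition of aligned type III (or more special).

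The one subtlety, and what I expect to be the main point requiring care rather than the main obstacle, is the base case $I=1$: one must verify that \emph{all} frame components of $\nabla\mbox{(Riem)}$ of b.w.\ $\ge 0$ vanish and that those of b.w.\ $-1$ are annihilated by $D$. The b.w.\ bookkeeping is: a derivative $\nabla_b$ raises b.w.\ by at most $+1$ (via the $D=\ell^b\nabla_b$ piece), so the b.w.-$k$ part of $\nabla\mbox{(Riem)}$ is fed by the b.w.-$k$ part of Riem (through $\T,\delta_i$ and Ricci-rotation terms, all of which preserve or lower b.w.\ once the frame is adapted as in \eqref{Kundt}, \eqref{Kundt_rot}) and by the b.w.-$(k-1)$ part of Riem through $D$. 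Since Riem has b.w.-$0$ part given by the parallel tensor in \eqref{riem_spec} and nothing of positive b.w., the b.w.-$(\ge 1)$ part of $\nabla\mbox{(Riem)}$ can only come from $D$ acting on the b.w.-$0$ part, which vanishes because \eqref{riem_spec} is covariantly constant (its only nonconstant ingredient, $R$, satisfies $DR=0$); and the b.w.-$0$ part of $\nabla\mbox{(Riem)}$ comes from $\T,\delta_i$ and rotation-coefficient terms acting on the (constant, parallel) b.w.-$0$ Riem together with $D$ acting on the b.w.-$(-1)$ Riem — the former vanishes as before and the latter vanishes by $DR_{010i}=0$ (type II) and is anyway not of positive b.w. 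Finally the b.w.-$(-1)$ part is $D$-closed by the Bianchi identities as above. Writing this out is a finite, essentially mechanical check against the list \eqref{dl}--\eqref{dn}, \eqref{B12}--\eqml{B4}, and \eqref{11be}--\eqref{11f}, with the comparison point being Proposition~5.1 of \cite{HerPraPra14}, whose proof carries over once the extra b.w.-$0$ Ricci-scalar term is shown to be harmless.
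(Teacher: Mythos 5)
Your strategy is the same as the paper's (base case $I=1$ by boost-weight counting plus the Bianchi identities, then induction via Lemma~\ref{lemma_deriv}), and the inductive step and the final implication ``balanced $\Rightarrow$ aligned type III'' are fine. The base case, however, has a genuine gap: you have under-specified what ``balanced'' means. You reduce the check to ``components of b.w.\ $\ge 0$ vanish and the b.w.\ $-1$ components are annihilated by $D$'', but Definition~\ref{def_balanced} requires $D^{-b}\eta=0$ for \emph{every} negative b.w.\ $b$, so for $\nabla(\mbox{Riem})$ (whose components reach down to b.w.\ $-4$) one must also prove $D^{2}\eta=0$ for the b.w.\ $-2$ components, $D^{3}\eta=0$ for b.w.\ $-3$ and $D^{4}\eta=0$ for b.w.\ $-4$. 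These conditions are not automatic and are exactly where \eqref{Ricci_D_-1-2} (e.g.\ $D^2N_{ij}=0$, $D^2L_{11}=0$, $D^3N_{i1}=0$) and the consequence $D^2R_{1i1j}=0$ of \eqref{B4} enter the paper's proof; your argument never addresses these components at all.

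A second, more localized error is in your bookkeeping of the b.w.\ $0$ part of $\nabla(\mbox{Riem})$. The relevant components are $DR_{101i}$ and $DR_{1kij}$ ($D$ acting on the b.w.\ $-1$ part of Riem); their vanishing does \emph{not} follow from ``$DR_{010i}=0$ (type II)'' --- $R_{010i}$ has b.w.\ $+1$ and its vanishing is a hypothesis, not the required conclusion --- nor is $\T$ relevant there (it lowers b.w., so it feeds the b.w.\ $-1$ part, not the b.w.\ $0$ part). What is actually needed is \eqref{B1} and \eqref{B6} combined with the explicit form \eqref{riem_spec}: for instance, in \eqref{B1} one has $R_{0101}=-R/[n(n-1)]$ and $R_{0i1s}=R\,\delta_{is}/[n(n-1)]$, and the right-hand side cancels precisely because of this maximally symmetric structure of the b.w.\ $0$ part, yielding $DR_{101i}=0$. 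You do invoke \eqref{B1} and \eqref{B6}, but you attach them to the b.w.\ $-1$ part of $\nabla(\mbox{Riem})$ (which instead involves $\T{\cal R}_0$, $\delta_i{\cal R}_{-1}$, $D{\cal R}_{-2}$ and products of ${\cal R}_{-1}$ with b.w.\ $0$ rotation coefficients), so the step that actually establishes type III for $\nabla(\mbox{Riem})$ is misplaced in your write-up. With these two points repaired, your proof would coincide with the paper's.
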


\begin{proof}

The assumptions mean that the Riemann tensor has the form \eqref{riem_spec} {and that \eqref{dl}--\eqref{dn} hold (which will be understood from now on)}. The Bianchi equations (B.5) and (B.7) of \cite{Pravdaetal04} (with \eqref{Kundt}, \eqref{RiemIb}) thus give $DR=0=\delta_i R$ {(i.e., $D{\cal R}_0=0=\delta_i {\cal R}_0$)}. 
By Proposition~\ref{prop_Kundt_deg} it trivially follows that the considered spacetimes are degenerate, so that $\nabla$(Riem) is of aligned type II or more special. Moreover, the only possible non-zero b.w. 0 components of $\nabla\mbox{(Riem)}$ must be proportional to $D{\cal R}_{-1}$. 
However, the Bianchi equations \eqref{B1} and \eqref{B6} with \eqref{Ricci_D_0} give $D R_{101i}=0=D R_{1kij}$ (i.e., {$D{\cal R}_{-1}=0$}), so that $\nabla$(Riem) is necessarily of aligned type III (or more special). 

Next, we want to show it is also balanced. First, a simple counting of the  b.w.s of the available Riemann components and Ricci rotation coefficients reveals that the {frame} components of $[\nabla\mbox{(Riem)}]_{-1}$ must be linear combinations (with constant coefficients) of terms of the form {$\T{\cal R}_{0}$, $\delta_i{\cal R}_{-1}$, $D{\cal R}_{-2}$, or ${\cal R}_{-1}$} multiplied by $L_{1i}$ (or by $L_{i1}$ or $\M{i}{j}{k}$).\footnote{Terms proportional to ${\cal R}_{0}$ do not contribute to $[\nabla (\mbox{Riem} )]_{-1}$ since the (tensorial) coefficient of ${\cal R}_{0}$ in \eqref{riem_spec} is covariantly constant.}
	 Using \eqref{11j}, {\eqref{TD}--\eqref{Ricci_D_-1-2}} and \eqref{B4} (which gives $D^2R_{1i1j}=0$), it is thus easy to show that {the $D$-derivative of the frame components of b.w.~$-1$ of $\nabla (\mbox{Riem} )$ vanish.} Similarly, employing {\eqref{Ricci_D_-1-2}}, one can show that {the $D^{-b}$-derivative of the frame components of b.w.~$b$ of $\nabla (\mbox{Riem} )$ vanish also for $b=-2,-3,-4$ (while $[\nabla (\mbox{Riem} )]_{b}=0$ identically for $b<-4$),} so that $\nabla$(Riem) is a balanced tensor, as we wanted to prove. By Lemma~\ref{lemma_deriv} the proof is complete.

\end{proof} 

\begin{remark}
 From Proposition~\ref{prop_Kundt_III}, it follows that only the components  ${\cal R}_0$ can enter the scalar curvature invariants, and therefore such spacetimes are contained in the CSI class iff $\T R=0$. We also observe that $R\neq 0$ requires $L_{i1}\neq0$ (as follows immediately from the Ricci identity (11i) of \cite{OrtPraPra07}), while for $R=0$ the considered spacetimes coincide with the VSI class (in which case $L_{i1}\neq0$ and $L_{i1}=0$ are both possible \cite{Coleyetal04vsi,Coleyetal06}). If one  restricts the Ricci tensor to the type~D, then necessarily $R=$const (by the contracted Bianchi identities), so that the spacetimes become Einstein and Proposition~\ref{prop_Kundt_III} reduces to Proposition~5.1 of \cite{HerPraPra14}.  
\end{remark}

\providecommand{\href}[2]{#2}\begingroup\raggedright\endgroup

%
%
%
%
%
%

\end{document}